\newsavebox{\foobox}
\newcommand{\slantbox}[2][0]{\mbox{%
        \sbox{\foobox}{#2}%
        \hskip\wd\foobox
        \pdfsave
        \pdfsetmatrix{1 0 #1 1}%
        \llap{\usebox{\foobox}}%
        \pdfrestore
}}
\newcommand\unslant[2][-.25]{\slantbox[#1]{$#2$}}
\newcommand{\mpi}{\text{\unslant[-.18]\pi}}
\newtheorem{thm}{Theorem}
\numberwithin{thm}{section}
\newtheorem{cor}[thm]{Corollary}
\newtheorem{lem}[thm]{Lemma}
\newtheorem{prop}[thm]{Proposition}
\renewcommand{\p@subsection}{}
\renewcommand{\p@subsubsection}{}
\begin{document}

\title{Non-perturbative dynamics of the operator size distribution in the Sachdev-Ye-Kitaev model}

\author{Andrew Lucas}
\email{andrew.j.lucas@colorado.edu}
\affiliation{Department of Physics and Center for Theory of Quantum Matter, University of Colorado, Boulder CO 80309, USA}

\begin{abstract}
We prove non-perturbative bounds on the time evolution of the probability distribution of operator size in the $q$-local Sachdev-Ye-Kitaev model with $N$ fermions, for any even integer $q>2$ and any positive even integer $N>2q$.  If the couplings in the Hamiltonian are independent and identically distributed Rademacher random variables, the infinite temperature many-body Lyapunov exponent is almost surely finite as $N\rightarrow\infty$.  In the limit $q \rightarrow \infty$, $N\rightarrow \infty$, $q^{6+\delta}/N\rightarrow 0$, the shape of the size distribution of a growing fermion, obtained by leading order perturbation calculations in $1/N$ and $1/q$, is similar to a distribution that locally saturates our constraints.    Our proof is not based on Feynman diagram resummation; instead, we note that the operator size distribution obeys a continuous time quantum walk with bounded transition rates, to which we apply concentration bounds from classical probability theory.
\end{abstract}
\date{\today}

\maketitle
\tableofcontents

\section{Introduction}
\subsection{Quantum gravity}
The Sachdev-Ye-Kitaev (SYK) model \cite{sachdevye,sachdev15,maldacena2016remarks,suh} is conjectured to be a toy model for quantum gravity in two dimensional spacetime.   The nature of the proposed correspondence is holographic \cite{maldacena}:  this model of quantum mechanics with $N\gg 1$ degrees of freedom should be equivalent to a quantum theory in two spacetime dimensions, which contains dynamical (quantum) gravity with coupling constant $G_{\mathrm{N}} \sim 1/N$  \cite{maldacena2016remarks, suh, almheiri, Jensen, Maldacena1606}.   If this correspondence is true, then at least one model of non-perturbative quantum gravity reduces to the physics of a special quantum mechanical model with a finite number of degrees of freedom.   

A particularly elegant feature of these holographic theories of quantum gravity is the ``maximally rapid" exponential growth of certain out-of-time-ordered correlation functions (OTOCs) \cite{shenker13,stanfordbound}.   If $\mathcal{O}_1$ and $\mathcal{O}_2$ represent few-body operators in the holographic quantum system, 
\begin{equation}
\mathrm{tr}\left(\sqrt{\rho} [\mathcal{O}_1(t),\mathcal{O}_2)]^\dagger \sqrt{\rho} [\mathcal{O}_1(t),\mathcal{O}_2)] \right) \sim \frac{1}{N} \mathrm{e}^{\lambda t},  \label{eq:introlyapunov}
\end{equation}
with $\lambda$ the many-body Lyapunov exponent.  One implicitly assumes that $N$ is very large, so that there might be a parametrically long range of times over which (\ref{eq:introlyapunov}) holds.  Here $\rho$ is the density matrix associated to a mixed quantum state, usually taken to be a thermal ensemble at inverse temperature $\beta$, where it was conjectured in \cite{stanfordbound} that \begin{equation}
\lambda \le \frac{2\mpi}{\beta}
\end{equation}
in many quantum systems -- especially those holographically dual to quantum gravity, where this bound is saturated \cite{shenker13}.  Indeed, the discovery of this maximal growth in the SYK model  demonstrated its apparent connections to quantum gravity.

In certain circumstances, these OTOCs can be interpreted as measuring the average size of a growing operator \cite{nahum, tibor, lucas1809}.  At leading order in $1/N$, there are hints that the time evolution of the operator size distribution in the SYK model \cite{stanford1802, alex1811} may reconstruct its holographically dual geometry \cite{lenny1,lenny2}.

 The expectation that $\lambda < \infty$ as $N\rightarrow \infty$ (at least for broad classes of models) goes by the name of the ``fast scrambling conjecture" \cite{sekino}.  While there are certainly counterexamples to the fast scrambling conjecture \cite{lucas1805, lucas1903} (at least at infinite temperature, where $\rho$ is proportional to the identity), they are arguably rather finely tuned.  For many models, especially those related via the holographic correspondence to quantum gravity, the fast scrambling conjecture is expected to hold.  However, it has never been mathematically demonstrated that the fast scrambling conjecture holds  in a genuine model of (any kind of) quantum gravity.  All that is known is that the fast scrambling conjecture holds at leading non-trivial order in a perturbative $1/N$ expansion of OTOCs.  Unfortunately, this $1/N$ expansion is expected to have zero radius of convergence: see e.g. \cite{stanford1903}.

The purpose of this paper is to prove the fast scrambling conjecture in the SYK model at infinite temperature.  We do so by proving non-perturbative constraints on the dynamics of the operator size distribution.   If the holographic correspondence is true, these are non-perturbative constraints on the early time dynamics in a theory of quantum gravity.  We find that a previous perturbative calculations of operator growth at infinite temperature \cite{stanford1802} essentially saturates our bounds, after a suitable rescaling of time.  So we conjecture that the perturbative description of operator growth in the SYK model is robust to all non-perturbative quantum effects.  This suggests that a semiclassical treatment of many-body chaos in holographic quantum gravity can be sensible, and that non-perturbative quantum gravitational effects do not destroy the emergence of (semi)classical space and time. 

While we focus on the SYK model in this paper, we expect that our bounds on operator growth can be generalized to other random quantum systems on regular factor graphs \cite{chen1}.   We emphasize that our approach completely bypasses the ordinary diagrammatic and manifestly perturbative approach to many-body physics, where non-perturbative effects must be obtained by sophisticated and generally non-unique resummation techniques \cite{stanford1903}.    Specifically, in our approach, non-perturbative results are rigorously obtained by classical combinatorial calculations with $\mathrm{O}(\log N)$ ``loops", which can themselves be bounded in a simple way.  In contrast, in an ordinary diagrammatic expansion, non-perturbative effects often become important at $\mathrm{O}(N)$ loops, which is usually prohibitively challenging to reach in analytic calculations.
 
 \subsection{Comparison to the Lieb-Robinson Theorem}

The techniques we use in our proof are quite different from the conventional Lieb-Robinson bounds \cite{liebrobinson,hastings}.  Here, one assumes a tensor product structure for the Hilbert space $\mathcal{H}$ (though see \cite{rezakhani}).  Define \begin{equation}
V := \lbrace 1,\ldots,N \rbrace \label{eq:Vdef}
\end{equation} 
to be the set of all quantum degrees of freedom, and then define 
\begin{equation}
\mathcal{H} := \bigotimes_{i=1}^N \mathcal{H}_i.
\end{equation}
 Let $X,Y\subset V$, and denote with $\mathrm{dist}(X,Y)$ a ``minimal path length" between any two points induced by the Hamiltonian which generates time evolution (we leave details to the references).  Then the Lieb-Robinson theorem states that there are constants $0 <C,\lambda,\mu<\infty$ such that \begin{equation}
\lVert [A_X(t),B_Y] \rVert \le C\exp[\lambda t - \mu \mathrm{dist}(X,Y)].
\end{equation}
However, we emphasize that the $\lambda$ here is not the same as $\lambda$ in (\ref{eq:introlyapunov}) in typical strictly local lattice models, as this bound is not qualitatively tight \cite{chen1}.

There are two reasons why the conventional Lieb-Robinson approaches do not seem relevant to the SYK model.  Firstly, the SYK model  (defined in Section \ref{sec:ensemble}) has no spatial locality on a lattice \, so $\mathrm{dist}(X,Y) = 0\text{ or }1$.  The primary ``control parameter" of the Lieb-Robinson bounds therefore is irrelevant.  Furthermore, it is not possible to show that $\lambda$ remains finite as $N\rightarrow \infty$, using a conventional combinatorial derivation of Lieb-Robinson bounds.  Secondly,  conventional Lieb-Robinson bounds are on the operator norm (maximal singular value) of commutators, which is far stronger than the trace-like norms of (\ref{eq:introlyapunov}). 

In this paper, we develop a technique to bypass bounding conventional operator norms, and bound (\ref{eq:introlyapunov}) directly whenever $\rho$ is the infinite temperature density matrix.  Our approach is based on interpreting Heisenberg operator growth as a many-body continuous-time quantum walk, which naturally implements the constraints of unitary time evolution in a far stronger way.  In particular, unitarity demands that a certain ``operator size distribution" is normalized as a conventional probability distribution.  The inability for conventional Lieb-Robinson bounds to maintain this normalization is, in large part, responsible for their inability to give sharp bounds on OTOCs such as (\ref{eq:introlyapunov}).     

Precursors to our approach include \cite{chen1,chen2}, which have also attempted to simplify the combinatorics of Lieb-Robinson bounds and obtain stronger results.\footnote{These methods share some similarity to Feynman diagram resummation methods in mathematical kinetic theory \cite{erdos}.   We do not know the extent to which the kinetic methods could be used to further improve our own techniques.  Our ``diagrammatic" resummation is far simpler than the one used to understand kinetic theory.}  We expect that the methods developed in this paper will be broadly applicable to many problems in quantum dynamics where it is not necessary to bound the operator norm of a commutator, such as quantum state transfer \cite{lucas2001}. 

\section{Preliminaries}
\subsection{Majorana fermions}
 Let $q,N\in2\mathbb{Z}^+$ be positive even integers, and define the set $V$ as in (\ref{eq:Vdef}).
Define a finite dimensional quantum mechanical Hilbert space \begin{equation}
\mathcal{H} := \left(\mathbb{C}^2\right)^{\otimes \frac{N}{2}} .
\end{equation}
To each of the $N$ elements $i\in V$, we associate a Majorana fermion $\psi_i$:  a Hermitian operator on $\mathcal{H}$ obeying the anticommutation relations \begin{equation} 
\lbrace \psi_i, \psi_j\rbrace := 2\mathbb{I}[i=j].  \label{eq:anticommutator}
\end{equation}
Here $\mathbb{I}[\cdots]$ denotes the indicator function, which returns 1 if its argument is true and 0 if false.

Let $\mathcal{B} = \mathrm{End}(\mathcal{H})$ be the set of quantum mechanical operators.  This is a complex inner product space if we define the inner product \begin{equation}
(A|B) := 2^{-N/2} \mathrm{tr}_{\mathcal{H}}(A^\dagger B).
\end{equation}
We denote $\lVert A\rVert = \sqrt{(A|A)}$: we emphasize that this is \emph{not} the conventional operator norm.   

\begin{prop}\label{prop1}
An orthonormal basis for $\mathcal{B}$ is spanned by $\lbrace |Y):Y\subseteq V\rbrace$, where \begin{equation}
|Y) := \psi_Y := \prod_{i\in Y}\psi_i.
\end{equation}
We define $\psi_\emptyset = 1$ (the identity operator).  Here and below the product over anticommuting fermions is ordered by the smallest $i$ first.  
\end{prop}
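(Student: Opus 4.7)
The plan is to verify that $\{|Y) : Y \subseteq V\}$ is an orthonormal family in $\mathcal{B}$, and then match dimensions. Since $\dim_{\mathbb{C}} \mathcal{H} = 2^{N/2}$, we have $\dim_{\mathbb{C}} \mathcal{B} = 2^N$, which equals the number of subsets of $V$, so any orthonormal family of that size is automatically a basis.

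To handle the inner products, I reduce $(\psi_X|\psi_Y) = 2^{-N/2}\,\mathrm{tr}(\psi_X^\dagger \psi_Y)$ to the trace of a single ordered Majorana monomial. Using only the anticommutation relations and Hermiticity of each $\psi_i$, an easy induction on $|X|$ gives $\psi_X^\dagger = (-1)^{|X|(|X|-1)/2}\psi_X$ and $\psi_X^2 = (-1)^{|X|(|X|-1)/2}\mathbb{I}$; hence $\psi_X^\dagger \psi_X = \mathbb{I}$ and $(\psi_X|\psi_X) = 1$. More generally, writing $X = A \sqcup W$ and $Y = W \sqcup B$ with $W = X \cap Y$, pushing fermions through one another and using $\psi_i^2 = \mathbb{I}$ on the paired elements of $W$ collapses $\psi_X^\dagger \psi_Y$ to a signed monomial $\pm \psi_Z$ with $Z := X \triangle Y$. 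The off-diagonal case $X \neq Y$ therefore reduces to showing $\mathrm{tr}(\psi_Z) = 0$ whenever $Z \neq \emptyset$.

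To prove this tracelessness I apply a parity trick: exhibit an index $i \in V$ with $\psi_i \psi_Z = -\psi_Z \psi_i$, after which cyclicity of the trace combined with $\psi_i^2 = \mathbb{I}$ yields $\mathrm{tr}(\psi_Z) = \mathrm{tr}(\psi_i \psi_Z \psi_i) = -\mathrm{tr}(\psi_Z)$. Moving $\psi_i$ past $\psi_Z$ picks up a factor of $-1$ for each distinct factor it encounters, so $\psi_i \psi_Z = (-1)^{|Z|}\psi_Z\psi_i$ if $i \notin Z$ and $\psi_i \psi_Z = (-1)^{|Z|-1}\psi_Z\psi_i$ if $i \in Z$. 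If $|Z|$ is odd, then $|Z| < N$ (as $N$ is even), so I choose any $i \in V\setminus Z$; if $|Z| \geq 2$ is even, I choose any $i \in Z$. In both cases the induced sign is $-1$, completing the argument. The only real bookkeeping is tracking the signs when reducing $\psi_X^\dagger \psi_Y$ to $\pm \psi_{X\triangle Y}$ and the parity case split above; no structural input beyond the defining anticommutation relations and $\dim_{\mathbb{C}} \mathcal{H} = 2^{N/2}$ is needed.
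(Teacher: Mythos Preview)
Your proof is correct and follows essentially the same approach as the paper: reduce $(\psi_X|\psi_Y)$ to $\pm\,2^{-N/2}\mathrm{tr}(\psi_{X\triangle Y})$, then show $\mathrm{tr}(\psi_Z)=0$ for $Z\neq\emptyset$ by splitting on the parity of $|Z|$ and choosing an index $i$ (outside $Z$ when $|Z|$ is odd, inside $Z$ when $|Z|$ is even) that anticommutes with $\psi_Z$, so cyclicity of the trace forces it to vanish. Your explicit handling of $\psi_X^\dagger$ and the dimension count $\dim\mathcal{B}=2^N=|\{Y\subseteq V\}|$ are minor additions of rigor, but the core argument is the same.
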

\begin{proof}
We must show that \begin{equation}
(Y_1|Y_2) = \mathbb{I}[Y_1=Y_2].
\end{equation}
If $Y_1=Y_2$, this identity follows immediately from (\ref{eq:anticommutator}).   Letting \begin{equation}
A\triangle B  = A \cup B - A\cap B
\end{equation}
 denote the symmetric difference of two sets, we conclude from (\ref{eq:anticommutator}) that \begin{equation}
 \psi_{Y_1}\psi_{Y_2} = \alpha \psi_{Y_1\triangle Y_2}\text{ for }\alpha \in \lbrace \pm 1 \rbrace.
 \end{equation}
   It remains to show that \begin{equation}
   \mathrm{tr}_{\mathcal{H}}(\psi_Y) = 0\text{, if } Y\ne\emptyset.  \label{eq:trpsi0} 
   \end{equation}
   To obtain (\ref{eq:trpsi0}) when $|Y|$ is an odd integer, we choose any $i\notin Y$ (which must exist as $N$ is even).  Since \begin{equation}
   \psi_Y\psi_i  = (-1)^{|Y|} \psi_i \psi_Y \;\;\; (i\notin Y),  \label{eq:psiYpsii}
   \end{equation}
   we conclude that for $|Y|$ odd \begin{equation}
   0 = \mathrm{tr}_{\mathcal{H}}(\psi_i \lbrace \psi_Y,\psi_i\rbrace) = \mathrm{tr}_{\mathcal{H}}(\psi_i  \psi_Y\psi_i + \psi_i^2 \psi_Y) =\mathrm{tr}_{\mathcal{H}}(2 \psi_Y),  
   \end{equation}
   where in the last step we used cyclic trace properties and (\ref{eq:anticommutator}).   If on the other hand $|Y| > 0$ is even, let $k$ be the maximal element of $Y$ and write $Y= Y_0 \cup \lbrace k\rbrace$.   Then we obtain \begin{equation}
0=\mathrm{tr}_{\mathcal{H}}(\lbrace \psi_{Y_0}, \psi_k \rbrace) = 2\mathrm{tr}_{\mathcal{H}}(\psi_{Y_0} \psi_k) = 2\mathrm{tr}_{\mathcal{H}}(\psi_{Y}).
   \end{equation}
  Thus we obtain (\ref{eq:trpsi0}) for any non-empty set $Y$.
\end{proof}

Let the Hamiltonian $H$ be a Hermitian operator.  $H$ generates a one parameter family of automorphisms, called time evolution, on $\mathcal{B}$: defining the Liouvillian \begin{equation}
\mathcal{L} = \mathrm{i}[H,\cdot], \label{eq:Liouvillian}
\end{equation}
for any $\mathcal{O}\in\mathcal{B}$ we define \begin{equation}
|\mathcal{O}(t)) = \mathrm{e}^{\mathcal{L}t} |\mathcal{O}).
\end{equation}
In fact, $\mathrm{e}^{\mathcal{L}t}$ is unitary, since \begin{equation}
(\mathcal{O}(t)|\mathcal{O}(t)) = 2^{-N/2} \mathrm{tr}_{\mathcal{H}}\left( \left(\mathrm{e}^{\mathrm{i}Ht}\mathcal{O}\mathrm{e}^{-\mathrm{i}Ht}\right)^\dagger\left(\mathrm{e}^{\mathrm{i}Ht}\mathcal{O}\mathrm{e}^{-\mathrm{i}Ht}\right)\right) =  2^{-N/2}\mathrm{tr}_{\mathcal{H}}\left( \mathcal{O}^\dagger \mathcal{O}\right) = (\mathcal{O}|\mathcal{O}). \label{eq:Lunitary}
\end{equation}
More generally, using the cyclic properties of the trace, we conclude that for any $A,B\in\mathcal{B}$: \begin{equation}
(A|\mathcal{L}|B) = -(B|\mathcal{L}|A). \label{eq:Lantisymmetric}
\end{equation}

Define the projection matrices\begin{equation}
\mathbb{Q}_s |Y) = \mathbb{I}[|Y| = s] |Y).
\end{equation}
Note that \begin{equation}
\sum_{s=0}^N \mathbb{Q}_s = 1 \label{eq:sumPs}
\end{equation}
(with 1 the identity of $\mathrm{End}(\mathcal{B})$).  We say that the non-null vectors of $\mathbb{Q}_s$ correspond to \emph{operators of size $s$}.   
Given $\mathcal{O}\in \mathcal{B}$, we say that \begin{equation}
P_s(\mathcal{O},t) = \frac{(\mathcal{O}(t)| \mathbb{Q}_s |\mathcal{O}(t))}{(\mathcal{O}(t)|\mathcal{O}(t))} \label{eq:PsOt}
\end{equation}
is the probability that operator $\mathcal{O}$ is size $s$ at time $t$.   To see that this is a well-defined probability measure on $\lbrace 0, 1,\ldots, N\rbrace$, observe that $\mathbb{Q}_s$ is positive semidefinite and hence $P_s \ge 0$; then from  (\ref{eq:sumPs}), \begin{equation}
\sum_{s=0}^N P_s(\mathcal{O},t) = 1,
\end{equation} 
for any $\mathcal{O}\in \mathcal{B}$ and $t\in\mathbb{R}$.    For simplicity, we will drop the explicit $\mathcal{O}$ in $P_s(t)$, as our formalism does not depend on the particular choice of operator.

%

\subsection{The SYK ensemble}\label{sec:ensemble}
The SYK model corresponds to a random ensemble of Hamiltonians.  Define \begin{equation}
F := \lbrace X\subset V : |X| = q\rbrace
\end{equation}
to be the set of all subsets of $V$ with exactly $q$ elements.
For each $X\in F$, let $J_X$ be an independent and identically distributed (iid) Rademacher\footnote{In the physics literature, the random variables $J_X$ are typically taken to be Gaussian.  We expect that a very similar result to ours will hold in this case as well, but we found the combinatorial problem discussed in Section \ref{sec:trans} to be a bit more elegant for Rademacher random variables.} random variable: \begin{equation}
\mathbb{P}\left[J_X = \sigma \right] = \mathbb{P}\left[J_X = -\sigma \right]  = \frac{1}{2}, 
\end{equation}
where \begin{equation}
\sigma := \left[2q\left(\begin{array}{c} N-1 \\ q-1 \end{array}\right) \right]^{-1/2}.  \label{eq:sigmadef}
\end{equation}
The $q$-local SYK model is the random ensemble of Hamiltonians $H$, corresponding to the random Hermitian matrix
\begin{equation}
H := \mathrm{i}^{q/2} \sum_{X\in F} J_X \prod_{i \in X} \psi_i := \sum_{X\in F} H_X.
\end{equation}
The randomness in the SYK ensemble is essential in our proof of operator growth bounds.   Averages over the ensemble of random variables $\lbrace J_X\rbrace $ are denoted as $\mathbb{E}[\cdots]$, and probability is denoted as $\mathbb{P}[\cdots]$, as above.
We define $\mathcal{L}_X := \mathrm{i}[H_X,\cdot]$.
\begin{prop} \label{propk}
If $\mathbb{Q}_s|\mathcal{O}_s) = |\mathcal{O}_s)$, $X\in F$, and $\mathbb{Q}_{s^\prime} \mathcal{L}_X|\mathcal{O}_s) \ne 0$, then there exists $k\in\mathbb{Z}^+$ for which \begin{equation}
s^\prime -s = q+2-4k.  \label{eq:kdef}
\end{equation}  
In particular, \begin{equation}
|s^\prime - s| \le q-2. \label{eq:qm2max}
\end{equation}
\end{prop}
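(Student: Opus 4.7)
The plan is to reduce to single basis vectors and then directly compute the commutator. Since $\mathbb{Q}_s|\mathcal{O}_s) = |\mathcal{O}_s)$, I can write $|\mathcal{O}_s) = \sum_{|Y|=s} c_Y |Y)$, and by linearity of $\mathcal{L}_X$ it suffices to show that whenever $\mathbb{Q}_{s'}\mathcal{L}_X|Y)\ne 0$ for some $Y\subseteq V$ with $|Y|=s$, then $s'-s = q+2-4k$ for some $k\in\mathbb{Z}^+$. Since $\mathcal{L}_X$ is a scalar multiple of $[\psi_X,\psi_Y]$, the whole statement reduces to understanding when and how this commutator produces a definite size.

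Next, I would evaluate $\psi_X\psi_Y$ and $\psi_Y\psi_X$ using the anticommutation relations (\ref{eq:anticommutator}). Each index $i\in X\cap Y$ appears twice in the product and, once brought adjacent by a sequence of anticommutations, yields $\psi_i^2=1$; the remaining fermions are precisely those in $X\triangle Y$, and taken in the sign ordering of Proposition \ref{prop1} this gives $\psi_X\psi_Y = \pm \psi_{X\triangle Y}$. To compare $\psi_X\psi_Y$ with $\psi_Y\psi_X$, I would count the sign accrued when swapping the block $\psi_X$ past $\psi_Y$: each of the $|X||Y|$ ordered pairs $(i,j)$ with $i\in X,\,j\in Y$ contributes $-1$ when $i\ne j$ and $+1$ when $i=j$, since the latter pairs involve identical Majoranas that commute trivially. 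This yields
\begin{equation}
\psi_Y\psi_X = (-1)^{|X||Y|-|X\cap Y|}\psi_X\psi_Y = (-1)^{qs-|X\cap Y|}\psi_X\psi_Y.
\end{equation}
Because $q$ is even, this sign collapses to $(-1)^{|X\cap Y|}$, so $[\psi_X,\psi_Y]$ vanishes unless $|X\cap Y|$ is odd, in which case $[\psi_X,\psi_Y] = \pm 2\psi_{X\triangle Y}$ is a pure size-$|X\triangle Y|$ operator.

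With this in hand, $\mathbb{Q}_{s'}\mathcal{L}_X|Y)\ne 0$ forces $s' = |X\triangle Y| = q+s-2|X\cap Y|$ with $|X\cap Y|$ a positive odd integer. Writing $|X\cap Y| = 2k-1$ for $k\in\mathbb{Z}^+$ gives $s'-s = q-2(2k-1) = q+2-4k$, which is (\ref{eq:kdef}). For the bound (\ref{eq:qm2max}), $k\ge 1$ yields $s'-s\le q-2$; and since $|X\cap Y|\le |X|=q$ together with $q$ even forces an odd intersection size to be at most $q-1$, i.e.\ $k\le q/2$, we also get $s'-s\ge -(q-2)$. The only step that requires any care is the sign count in the block swap; everything else is straightforward bookkeeping on $|X\triangle Y|$.
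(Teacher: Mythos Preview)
Your proof is correct. Both you and the paper reduce to showing that $[\psi_X,\psi_Y]$ is nonzero only when $|X\cap Y|$ is odd, in which case it is proportional to $\psi_{X\triangle Y}$, and then read off $s'=q+s-2|X\cap Y|$. The difference is purely in how that parity condition is established: the paper splits $X=V\cup Z$, $Y=W\cup Z$ with $Z=X\cap Y$ and applies the Leibniz rule for commutators together with the disjoint-case identity $\psi_A\psi_B=(-1)^{|A||B|}\psi_B\psi_A$; you instead count the sign of the full block swap $\psi_X\psi_Y\mapsto\psi_Y\psi_X$ directly, observing that of the $|X||Y|$ adjacent transpositions exactly $|X\cap Y|$ involve coincident indices and hence contribute $+1$. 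Your route is slightly more direct and avoids the commutator expansion; the paper's decomposition makes the role of the even $q$ a bit more structural (it forces $|V\cup Z|$ even, killing one of the two Leibniz terms outright). Either way the bookkeeping for $|s'-s|\le q-2$ is identical.
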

\begin{proof}
Since $\mathcal{L}_X|Y)$ is proportional to $|[\psi_X,\psi_Y])$, we analyze when $[\psi_X,\psi_Y]\ne 0$ is possible.   Without loss of generality we write \begin{equation}
Z = X\cap Y, \;\;\;\;\; V=X-Z, \;\;\;\;\; W = Y-Z,
\end{equation}in which case it suffices to constrain
\begin{equation}
\lVert [\psi_X,\psi_Y]\rVert = \lVert [\psi_V\psi_Z,\psi_W\psi_Z]\rVert  = \lVert [\psi_V\psi_Z,\psi_W]\psi_Z + \psi_W [\psi_V,\psi_Z]\psi_Z\rVert.
\end{equation}
By repeated use of (\ref{eq:psiYpsii}), if $A\cap B = 0$, \begin{equation}
\psi_A \psi_B = (-1)^{|A||B|}\psi_B\psi_A, \label{eq:psiApsiB}
\end{equation}
hence $[\psi_A,\psi_B]\ne 0$ if and only if $|A|$ and $|B|$ are both odd.   Since $|V\cap Z|$ is even, we conclude that $[\psi_X,\psi_Y]=0$ unless $|V|$ and $|Z|$ are both odd, in which case $|[\psi_X,\psi_Y])$ is proportional to $|\psi_V\psi_W)$.  If $X\in F$, then $|X|=q$, and so $\mathbb{Q}_{s^\prime}\mathcal{L}_X\mathbb{Q}_s|Y)\ne 0$ only if $|Y|=s$ and \begin{equation}
s^\prime = |X|+|Y|-2|X\cap Y| = s+q - 2|X\cap Y|.
\end{equation}
Since $|X\cap Y|$ is odd, we obtain the desired result.
\end{proof}
Since by definition $s^\prime > s$, we conclude that \begin{equation}
2k-1 < \frac{q}{2}.  \label{eq:2kminus1}
\end{equation}

It will be useful to define the following partition of the set of all non-trivial operator sizes: \begin{equation}
\lbrace 1,\ldots ,N \rbrace = \bigcup_{l=0}^{N^\prime} R_l \label{eq:ldef}
\end{equation}
where \begin{equation}
N^\prime := \left\lceil \frac{N-1}{q-2} \right\rceil \label{eq:Nprimedef}
\end{equation}
and \begin{equation}
R_l := \left\lbrace \begin{array}{ll}  \lbrace 1\rbrace &\ l=0 \\ \lbrace m \in \mathbb{Z}:  (l-1)(q-2)+1<m\le  l(q-2)+1 \rbrace &\ 0<l<N^\prime \\
 \lbrace m \in \mathbb{Z}:  (N^\prime-1)(q-2)+1<m\le  N \rbrace &\  l = N^\prime  \end{array} \right..
\end{equation}
Analogous to before, we say that the probability that $\mathcal{O}(t)$ is in \emph{block} $l$ is given by \begin{equation}
P_l(t) :=  \sum_{s\in R_l} P_s(t).
\end{equation} 
We define the projectors \begin{equation}
\mathbb{Q}_l := \sum_{s\in R_l} \mathbb{Q}_s
\end{equation}
We use the same symbols $P$ and $\mathbb{Q}$ to denote operator probability distributions and projectors, and will use the subscript ($l$ or nearby letters vs. $s$ or nearby letters) to distinguish whether we are referring to blocks or individual sizes.

\section{Operator growth}
\subsection{Quantum walk of the size distribution}
We now show that the time evolution of the operator block distributions $P_l(t)$ can be interpreted as a continuous time quantum walk on a finite one dimensional chain.   The results of this section are very general, and apply to a broad class of quantum many-body models beyond the SYK model, for suitable redefinitions of operator size.  Define $\varphi_l(t)$ as\footnote{The authors of \cite{stanford1802} called this the ``operator wave function".} \begin{equation}
\varphi_l(t) := \sqrt{P_l(t)}.   \label{eq:varphisqrt}
\end{equation}  For $\mathcal{M} \in \mathrm{End}(\mathcal{B})$, we define \begin{equation}
\lVert \mathcal{M}\rVert := \sup_{\mathcal{O} \in \mathcal{B}} \frac{\lVert \mathcal{M}|\mathcal{O})\rVert}{\lVert \mathcal{O}\rVert} = \sup_{\mathcal{O},\mathcal{O}^\prime \in \mathcal{B}} \frac{| (\mathcal{O}^\prime| \mathcal{M}|\mathcal{O}) |}{\lVert \mathcal{O}\rVert \lVert \mathcal{O}^\prime \rVert} ,  \label{eq:endBnorm}
\end{equation}
in analogy with the usual operator norm on linear transformations. 
  \begin{prop}\label{prop2}
Let $0\le s<s^\prime \le N$, let $H$ be a Hamiltonian drawn from the SYK ensemble, and let \begin{equation}
\mathcal{K}_{s^\prime s} = \lVert \mathbb{Q}_{s^\prime}\mathcal{L}\mathbb{Q}_s\rVert.  \label{eq:calKss}
\end{equation}
Then there exist functions $K_{s^\prime s}: \mathbb{R} \rightarrow [-\mathcal{K}_{ss^\prime}, \mathcal{K}_{ss^\prime} ]$ such that \begin{equation}
\frac{\mathrm{d}}{\mathrm{d}t} \sqrt{P_s(t)}= \sum_{s^\prime < s} K_{s s^\prime}(t) \sqrt{P_{s^\prime}(t)} -  \sum_{s>s^\prime} K_{ s^\prime s}(t) \sqrt{P_{s^\prime}(t)} \label{eq:prop2}
\end{equation}
Analogously, there exist functions $K_l: \mathbb{R} \rightarrow [-\mathcal{K}_l, \mathcal{K}_l]$ such that \begin{equation}
\frac{\mathrm{d}\varphi_l}{\mathrm{d}t} = K_{l-1}(t) \varphi_{l-1}(t) - K_l(t) \varphi_{l+1}(t), \label{eq:prop2dos}
\end{equation}
(recall $l$ was defined in (\ref{eq:ldef})) so long as \begin{equation}
\mathcal{K}_l = \max\left(\max_{s\in R_l} \sum_{s^\prime \in R_{l+1}} \mathcal{K}_{s^\prime s},  \max_{s^\prime \in R_{l+1}} \sum_{s\in R_{l}} \mathcal{K}_{s^\prime s} \right)  \label{eq:Kldef}
\end{equation}
and $K_{-1}(t) = K_{N^\prime}(t)=0$.  These latter restrictions simply restrict the dynamics to operators in blocks $R_0$ to $R_{N^\prime}$.
\end{prop}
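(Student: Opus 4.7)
The plan is to differentiate $P_s(t)=(\mathcal{O}(t)|\mathbb{Q}_s|\mathcal{O}(t))$ directly, use the antisymmetry identity (\ref{eq:Lantisymmetric}) to reorganize the result into a quadratic form in the size-projected components of $\mathcal{O}(t)$, and then read off the transition rates as those matrix elements of $\mathcal{L}$ normalized by $\varphi_s(t)\varphi_{s'}(t)$. The required bounds will follow from the operator norm (\ref{eq:endBnorm}), using that the anti-Hermiticity of $\mathcal{L}$ built into (\ref{eq:Lantisymmetric}) makes $\lVert\mathbb{Q}_s\mathcal{L}\mathbb{Q}_{s'}\rVert=\lVert\mathbb{Q}_{s'}\mathcal{L}\mathbb{Q}_s\rVert$.

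First I would normalize $\lVert\mathcal{O}\rVert=1$ using (\ref{eq:Lunitary}), so that $\tfrac{d}{dt}P_s=(\mathcal{O}(t)|[\mathbb{Q}_s,\mathcal{L}]|\mathcal{O}(t))$. Writing $|\mathcal{O}_s(t)):=\mathbb{Q}_s|\mathcal{O}(t))$ and inserting $1=\sum_{s'}\mathbb{Q}_{s'}$ on each side of $\mathcal{L}$ in the commutator, the $s'=s$ contribution vanishes while (\ref{eq:Lantisymmetric}) collapses the two cross terms into
\begin{equation*}
\frac{d}{dt}P_s(t)=2\sum_{s'\ne s}(\mathcal{O}_s(t)|\mathcal{L}|\mathcal{O}_{s'}(t)).
\end{equation*}
Dividing by $2\varphi_s(t)$ and defining, for $s>s'$,
\begin{equation*}
K_{ss'}(t):=\frac{(\mathcal{O}_s(t)|\mathcal{L}|\mathcal{O}_{s'}(t))}{\varphi_s(t)\varphi_{s'}(t)}
\end{equation*}
(set to zero wherever the denominator vanishes, which is consistent because the numerator vanishes there as well), a second application of (\ref{eq:Lantisymmetric}) turns the $s'>s$ summands into $-K_{s's}(t)\varphi_{s'}(t)\varphi_s(t)$, yielding (\ref{eq:prop2}). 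The bound $|K_{ss'}(t)|\le\mathcal{K}_{ss'}$ then follows from Cauchy--Schwarz combined with (\ref{eq:endBnorm}), since the numerator equals $(\mathcal{O}_s|\mathbb{Q}_s\mathcal{L}\mathbb{Q}_{s'}|\mathcal{O}_{s'})$ and $\lVert\mathbb{Q}_s\mathcal{L}\mathbb{Q}_{s'}\rVert=\mathcal{K}_{ss'}$.

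The block version (\ref{eq:prop2dos}) arises from repeating the computation with $\mathbb{Q}_l$ in place of $\mathbb{Q}_s$. The crucial new input is Proposition \ref{propk}: by (\ref{eq:qm2max}) $\mathcal{L}$ shifts size by at most $q-2$, whereas the nearest elements of $R_l$ and $R_{l'}$ with $|l-l'|\ge 2$ differ by at least $q-1$ in size. Therefore $\mathbb{Q}_l\mathcal{L}\mathbb{Q}_{l'}=0$ whenever $|l-l'|\ge 2$, only the $l\pm 1$ neighbours survive, and one obtains (\ref{eq:prop2dos}) with $K_l(t):=(\mathcal{O}_{l+1}|\mathcal{L}|\mathcal{O}_l)/(\varphi_{l+1}\varphi_l)$ and $|\mathcal{O}_l(t)):=\mathbb{Q}_l|\mathcal{O}(t))$.

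The only non-routine step is the bound $|K_l(t)|\le\mathcal{K}_l$, which reduces to a Schur test on $\mathbb{Q}_{l+1}\mathcal{L}\mathbb{Q}_l$ viewed as a ``block matrix'' whose entries are the $\mathbb{Q}_{s'}\mathcal{L}\mathbb{Q}_s$, $s\in R_l$, $s'\in R_{l+1}$. I would expand $|v)=\sum_{s\in R_l}|v_s)$ with $|v_s):=\mathbb{Q}_s|v)$, apply Cauchy--Schwarz inside each row norm $\lVert\sum_{s\in R_l}\mathbb{Q}_{s'}\mathcal{L}\mathbb{Q}_s|v_s)\rVert$ with weights $\mathcal{K}_{s's}$, and then swap the order of summation; the resulting double sum is at most $(\max_{s'}\sum_{s}\mathcal{K}_{s's})(\max_{s}\sum_{s'}\mathcal{K}_{s's})\lVert v\rVert^2\le\mathcal{K}_l^2\lVert v\rVert^2$ by (\ref{eq:Kldef}), which is exactly what is needed.
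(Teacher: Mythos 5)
Your proof is correct and takes essentially the same route as the paper: differentiate $P_s$, insert $\sum_{s'}\mathbb{Q}_{s'}$, use the antisymmetry (\ref{eq:Lantisymmetric}) to collapse the commutator, define $K_{ss'}$ as normalized matrix elements of $\mathcal{L}$ between size-projected components, bound them by $\mathcal{K}_{ss'}$ via (\ref{eq:endBnorm}), and invoke Proposition \ref{propk} to restrict the block walk to nearest neighbours. The only cosmetic difference is at the final step, where you use a Schur test to obtain $\lVert\mathbb{Q}_{l+1}\mathcal{L}\mathbb{Q}_l\rVert\le\sqrt{ab}\le\max(a,b)=\mathcal{K}_l$, whereas the paper reaches (\ref{eq:Kldef}) via the AM--GM inequality $\sqrt{P_sP_{s'}}\le\tfrac12(P_s+P_{s'})$ together with normalization of the size distributions; both are standard block-matrix norm arguments and give the same conclusion.
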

\begin{proof}
For simplicity in this proof, the $t$-dependence of $\mathcal{O}$ is implicit; without loss of generality, we may take $\lVert \mathcal{O}\rVert = 1$ by (\ref{eq:Lunitary}).  For $s\in \lbrace0,\ldots,N\rbrace$, let $|\mathcal{A}_s)$ be a unit norm operator such that \begin{equation}
\mathbb{Q}_s|\mathcal{O}) = \sqrt{P_s}|\mathcal{A}_s),
\end{equation}
and note that if $P_s \ne 0$, $|\mathcal{A}_s)$ is unique.   Now from (\ref{eq:Liouvillian}) and (\ref{eq:PsOt}), \begin{align}
\frac{\mathrm{d}P_s}{\mathrm{d}t} &= (\mathcal{O}| [\mathbb{Q}_s,\mathcal{L}] |\mathcal{O}) = \sum_{s^\prime} \sqrt{P_sP_{s^\prime}} \left[(\mathcal{A}_s|\mathcal{L}|\mathcal{A}_{s^\prime})-(\mathcal{A}_{s^\prime}|\mathcal{L}|\mathcal{A}_{s})\right] \notag \\
&= 2\sqrt{P_s} \sum_{s^\prime < s} K_{ss^\prime}(t)\sqrt{P_{s^\prime}} - 2\sqrt{P_s} \sum_{s^\prime > s} K_{s^\prime s}(t)\sqrt{P_{s^\prime}} \label{eq:prop2first}
\end{align}
where in the first line, we used (\ref{eq:sumPs}) and in the second line we used (\ref{eq:Lantisymmetric}) and defined \begin{equation}
K_{ss^\prime}(t) = (\mathcal{A}_s|\mathbb{Q}_s\mathcal{L}\mathbb{Q}_{s^\prime}|\mathcal{A}_{s^\prime}). \label{eq:proofKss}
\end{equation}
Since $\mathrm{d}\sqrt{P_s} = \mathrm{d}P_s / 2\sqrt{P_s}$, we obtain (\ref{eq:prop2}).  Combining (\ref{eq:endBnorm}), (\ref{eq:calKss}) and (\ref{eq:proofKss}), we obtain (\ref{eq:prop2}). 

The analogue result for block probabilities is identically derived.  In addition, observe that (\ref{eq:qm2max}) implies that \begin{equation}
\mathbb{Q}_{l^\prime} \mathcal{L} \mathbb{Q}_l \ne 0 \text{ only if } |l^\prime - l| \le 1.
\end{equation}
Hence we obtain (\ref{eq:prop2dos}) where \begin{equation}
K_l(t) := (\mathcal{O}(t) | \mathbb{Q}_{l+1} \mathcal{L}\mathbb{Q}_l | \mathcal{O}(t)) \le \lVert \mathbb{Q}_{l+1}\mathcal{L}\mathbb{Q}_l \rVert := \mathcal{K}_l.
\end{equation}
Using (\ref{eq:endBnorm}): 
\begin{align}
\mathcal{K}_l &= \sup_{\mathcal{O},\mathcal{O}^\prime} \frac{(\mathcal{O}^\prime|\mathbb{Q}_{l+1}\mathcal{L}\mathbb{Q}_l|\mathcal{O})}{\sqrt{(\mathcal{O}|\mathcal{O})(\mathcal{O}^\prime|\mathcal{O}^\prime)}} \le \sup_{\mathcal{O},\mathcal{O}^\prime} \sum_{\substack{ s\in R_l \\s^\prime \in R_{l+1} } }\sqrt{P_s(\mathcal{O})P_{s^\prime}(\mathcal{O}^\prime)} \lVert \mathbb{Q}_{s^\prime} \mathcal{L}\mathbb{Q}_s\rVert \notag \\
&\le \sup_{\mathcal{O},\mathcal{O}^\prime} \sum_{\substack{ s\in R_l \\s^\prime \in R_{l+1} } }\frac{P_s(\mathcal{O})+P_{s^\prime}(\mathcal{O}^\prime)}{2} \mathcal{K}_{s^\prime s}.
\end{align}
A simple identity leads to (\ref{eq:Kldef}).
 \end{proof}

(\ref{eq:prop2dos}) can be interpreted as follows.  $\varphi_l(t)$ are the coefficients of the real-valued quantum wave function $|\varphi(t)\rangle$ of an auxiliary quantum mechanical system defined on the Hilbert space \begin{equation}
\mathcal{H}_{\mathrm{aux}} := \mathbb{C}^{1+N^\prime} :=  \mathrm{span}\lbrace |0\rangle, |1\rangle, \ldots, |N^\prime\rangle \rbrace;
\end{equation} 
the latter basis states are defined such that \begin{equation}
\varphi_l(t) := \langle l|\varphi(t)\rangle.  \label{eq:auxvarphil}
\end{equation}
The auxiliary Hamiltonian is \begin{equation}
H_{\mathrm{aux}}(t) := \sum_{l=0}^{N^\prime - 1} \mathrm{i} K_l(t) \left( |l\rangle\langle l-1| - |l-1\rangle\langle l| \right). \label{eq:Haux}
\end{equation}
The Schr\"odinger equation for this auxiliary quantum system is (\ref{eq:prop2dos}).   

\subsection{Lyapunov exponent}
Define the operator (block) size distribution \begin{equation}
\mathbb{E}_{\mathrm{s},t} \left[ f(l) \right] := \sum_{l=0}^{N^\prime} f(l) P_l(t).
\end{equation}   A formal definition of the many-body Lyapunov exponent, heuristically defined in (\ref{eq:introlyapunov}), is given by the growth rate of the logarithm of the average operator size $\mathbb{E}_{\mathrm{s},t}[l]$ (recall $l$ was defined in (\ref{eq:ldef}).  This Lyapunov growth is constrained by the following theorem, which is our first main result:
\begin{thm} \label{lyapunovtheor}
Suppose that there exist $c\in \mathbb{R}^+$ and $M \in \mathbb{Z}^+$ such that \begin{equation}
\mathcal{K}_l \le c(l+1)\;\;\;\; \text{ if }l \le M. \label{eq:Kll}
\end{equation}
Then for any $\epsilon \in \mathbb{R}^+$, the many-body Lyapunov exponent obeys \begin{equation}
\frac{\log \mathbb{E}_{\mathrm{s},t}[l]}{t} := \lambda(t) \le 2c (1+\epsilon)  \label{eq:lambdadef}
\end{equation}
for times \begin{equation}
|t| < \frac{1}{4c(1+\mathrm{e})}\left[ \log M - 2 - \log \log \frac{N^{\prime 3}}{2\epsilon}\right] .  \label{eq:scramblingtime}
\end{equation}
\end{thm}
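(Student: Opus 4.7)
The plan is to split $\mathbb{E}_{\mathrm{s},t}[l]$ into a controlled ``interior'' contribution from sizes $l \le M$ and a ``tail'' contribution from $l > M$ shown to be negligible on the time scale (\ref{eq:scramblingtime}). To avoid any circular reasoning, I work with the truncated size $X_l := \min(l, M)$, whose increments $X_{l+1} - X_l = \mathbb{I}[l < M]$ vanish precisely on the edges where the hypothesis $\mathcal{K}_l \le c(l+1)$ is unavailable. Writing $\dot P_l = A_{l-1} - A_l$ with $A_l := 2 K_l(t) \varphi_l \varphi_{l+1}$ (a direct consequence of (\ref{eq:prop2dos})) and index-shifting in the evolution of $\mathbb{E}_{\mathrm{s},t}[X] = \sum_l X_l P_l(t)$, applying $|K_l| \le c(l+1)$ together with AM--GM $|\varphi_l \varphi_{l+1}| \le (\varphi_l^2 + \varphi_{l+1}^2)/2$ and one more index shift yields the linear differential inequality $|\dot{\mathbb{E}}_{\mathrm{s},t}[X]| \le c(2\mathbb{E}_{\mathrm{s},t}[X] + 1)$. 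Gr\"onwall then integrates this to $\mathbb{E}_{\mathrm{s},t}[X](t) \le (e^{2ct}-1)/2$.

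The same bookkeeping applied to the moment generating function $F_X(\mu, t) := \sum_l e^{\mu X_l} P_l(t)$ gives a first-order PDE inequality
\[
\partial_t F_X \le c(e^\mu - 1)\bigl[(1 + e^{-\mu})\partial_\mu F_X + F_X\bigr] = 2c\sinh\mu \cdot \partial_\mu F_X + c(e^\mu - 1)\,F_X,
\]
using $(e^\mu - 1)(1 + e^{-\mu}) = 2\sinh\mu$. Integrating along characteristics $\dot\mu = -2c\sinh\mu$ (which conserve $\tanh(\mu/2)\, e^{2ct}$) with the initial condition $F_X(\cdot, 0) = 1$ produces the explicit envelope
\[
F_X(\mu, t) \le \frac{1 - \tanh(\mu/2)}{1 - e^{2ct}\tanh(\mu/2)}, \qquad e^{2ct}\tanh(\mu/2) < 1.
\]

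Since $X_l = M$ on $\{l > M\}$, Markov's inequality gives $q_M(t) := \sum_{l > M} P_l(t) \le e^{-\mu M} F_X(\mu, t)$ for any admissible $\mu$. Substituting the envelope and linearizing on $\mu \in [0,1]$ (using $e^\mu - 1 \le (e-1)\mu$ and $1 + e^{-\mu} \le 2$ to keep the analysis in closed form), optimization over $\mu$ produces a double-exponential tail of the schematic form $q_M(t) \lesssim \exp(- M e^{-\alpha c t})$, with $\alpha$ an $O(1)$ constant determined by the linearization (it is in this balancing that the factor $1+e$ appears in (\ref{eq:scramblingtime})). Finally, the decomposition
\[
\mathbb{E}_{\mathrm{s},t}[l] = \mathbb{E}_{\mathrm{s},t}[X] + \sum_{l > M}(l - M) P_l \le \mathbb{E}_{\mathrm{s},t}[X] + N'\, q_M(t),
\]
combined with the requirement $N'\, q_M(t) \le \tfrac12 \epsilon\, e^{2ct}$ (equivalent to $Me^{-\alpha c t} \ge \log(N'^3/(2\epsilon)) + O(1)$, which produces the $\log\log$ correction and the additive $-2$ in (\ref{eq:scramblingtime})) yields $\mathbb{E}_{\mathrm{s},t}[l](t) \le \tfrac12(1+\epsilon)\, e^{2ct}$, whence $\lambda(t) \le 2c(1+\epsilon)$ after taking the logarithm and dividing by $t$.

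The main obstacle is the quantitative interplay between the PDE envelope and the Chernoff-Markov optimization: the optimal $\mu^*(t)$ must stay in the linearization regime $\mu \le 1$ throughout the entire interval (\ref{eq:scramblingtime}), while simultaneously the double-exponential tail $\exp(-M e^{-\alpha c t})$ must beat the polynomial prefactor $N'$. Balancing these two constraints is what fixes the precise scrambling-time constant $1/[4c(1+e)]$ and the origin of the $\log\log(N'^3/(2\epsilon))$ correction, and any looser linearization of $\sinh\mu$ or $e^\mu - 1$ will degrade these constants but not the structure of the bound.
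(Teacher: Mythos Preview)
Your approach is correct and genuinely different from the paper's.  The paper controls the tail $\mathbb{P}_{\mathrm{s}}[l>M]$ via a two-step argument: first it derives a differential inequality for the integer moments $\mathbb{E}_{\mathrm{s}}[l^n]$ (Lemma~\ref{lmael}), which however requires $\mathcal{K}_l\le c(l+1)$ for \emph{all} $l$; to use it, the paper introduces an auxiliary truncated walk $\widetilde K$ with transitions excised above $M$, compares $\varphi_{M+1}$ to $\widetilde\varphi_M$ via a Duhamel/memory-matrix identity, and then applies Markov on $\mathbb{E}_{\widetilde{\mathrm{s}}}[l^n]$ optimized over $n$.  Your choice to work directly with $X_l=\min(l,M)$ makes the increments vanish exactly on the edges where the hypothesis fails, so you never need the truncated-walk comparison; the MGF envelope solved along characteristics then replaces both lemmas at once.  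This is cleaner and in fact sharper: your exact envelope yields $q_M(t)\lesssim \exp\bigl[-M\,e^{-2ct}\bigr]$, i.e.\ $\alpha=2$, whereas the paper's moment route produces $\alpha=4(1+\mathrm e)$ because of the crude inequality $(l+1)^{n+1}-l^{n+1}\le (n+1)(1+\mathrm e)l^n+\cdots$.  So your remark that the $1+\mathrm e$ ``appears in the balancing'' is not quite right---your method would naturally give a longer time window than (\ref{eq:scramblingtime}), and you would have to artificially loosen your linearization to land on the paper's constants.  Likewise the $N^{\prime 3}$ inside the $\log\log$ in (\ref{eq:scramblingtime}) arises from the prefactor $c^2(M+1)t^2$ produced by the Duhamel integral; your bound on $q_M$ carries no such prefactor, so the final algebra to extract $\epsilon$ will differ in detail (and again be slightly better).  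The trade-off is that the paper's argument is entirely elementary (discrete moments, no PDE integration), while yours requires solving the characteristic ODE and tracking the admissible range $e^{2ct}\tanh(\mu/2)<1$.
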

\begin{proof}
Without loss of generality we assume $t\ge 0$.  We begin with the following lemma.    Note that here and below, we write $\mathbb{E}_{\mathrm{s},t}$ as $\mathbb{E}_{\mathrm{s}}$ for convenience.
 \begin{lem}
If (\ref{eq:Kll}) holds with $M\ge N^\prime-1$, then for $n\in \mathbb{Z}^+$, 
\begin{equation}
\frac{\mathrm{d}}{\mathrm{d}t} \mathbb{E}_{\mathrm{s}} \left[l^n \right] \le 4cn (1+\mathrm{e}) \left(\mathbb{E}_{\mathrm{s}}\left[l^n \right] + (\mathrm{e}n)^n\right)  \label{eq:ddtln}
\end{equation}
In the special case $n=1$, the following stronger inequality holds: \begin{equation}
\frac{\mathrm{d}}{\mathrm{d}t} \mathbb{E}_{\mathrm{s}} \left[l \right] \le c \left(2\mathbb{E}_{\mathrm{s}} \left[l \right]  + 1\right).  \label{eq:ddtln1}
\end{equation}\label{lmael}
\end{lem}
\begin{proof}
We begin by using (\ref{eq:prop2dos}):  for any non-decreasing function $f:\mathbb{Z} \rightarrow \mathbb{R}$, \begin{align}
\frac{\mathrm{d}}{\mathrm{d}t} \mathbb{E}_{\mathrm{s}} \left[f(l) \right]  &=  \sum_{l=0}^{N^\prime } f(l) \left(2 \varphi_l \frac{\mathrm{d}\varphi_l}{\mathrm{d}t}\right) = 2\sum_{l=0}^{N^\prime} f(l) \varphi_l \left[ K_{l-1}\varphi_{l-1} - K_l \varphi_l\right] = 2 \sum_{l=0}^{N^\prime- 1} K_l \varphi_l \varphi_{l+1} [ f(l+1)-f(l)] \notag \\
&\le 2c\sum_{l=0}^{N^\prime- 1}  \varphi_l \varphi_{l+1} (l+1) [ f(l+1)-f(l)] \le c\sum_{l=0}^{N^\prime - 1} (P_l + P_{l+1}) (l+1) [f(l+1)-f(l)]. \label{eq:fl1fl}
\end{align}
In particular, choosing $f(l)=l^n$, we may further loosen this inequality using elementary inequalities: \begin{equation}
\frac{\mathrm{d}}{\mathrm{d}t} \mathbb{E}_{\mathrm{s}} \left[f(l) \right]  \le 2c \sum_{l=0}^{N^\prime -1} P_l \left( (l+1)^{n+1} - l^{n+1}\right). \label{eq:ddtln2}
\end{equation}
Now observe that \begin{equation}
(l+1)^{n+1}-l^{n+1} = (n+1) l^n + \sum_{k=0}^{n-1} \left(\begin{array}{c} n+1 \\ k \end{array}\right) l^k \le (n+1)l^n + n(n+1) (l+1)^{n-1}.
\end{equation}
Next, note the inequality \begin{equation}
n(l+1)^{n-1} < \mathrm{e} l^n +  (\mathrm{e}n)^n \label{eq:lma5ineqinterm}
\end{equation}
which we derive by multiplying both sides of (\ref{eq:lma5ineqinterm}) by $l^{-n}$, assuming $l>1$ (the inequality is trivial when $l=0$): \begin{equation}
\frac{n}{l} \left(1+\frac{1}{l}\right)^{n-1} < \frac{n}{l} \mathrm{e}^{n/l} < \mathrm{e} +  \left(\frac{\mathrm{e}n}{l}\right)^n.
\end{equation}
For $n\le l$, the first term on the right hand side is always at least as large as the middle term; for $n> l$, the second term on the right is larger.  Combining (\ref{eq:ddtln2}) and (\ref{eq:lma5ineqinterm}), we obtain (\ref{eq:ddtln}). 

For the case $n=1$, we use that $f(l+1)-f(l)=1$.  Directly plugging in to (\ref{eq:fl1fl}) we obtain (\ref{eq:ddtln1}).
\end{proof}
The next lemma shows that even when $K_l$ grow faster than (\ref{eq:Kll}) at large $l$, $P_l(t)$ is very small for $l>M$ at early times. 
\begin{lem}\label{lmalarge}
If $K_l(t)$ obeys (\ref{eq:Kll}), then \begin{equation}
\frac{\mathrm{d}}{\mathrm{d}t} \mathbb{P}_{\mathrm{s}}[ l>M] \le 2\mathrm{e}c^2(M+1)t \exp\left[ -M \mathrm{e}^{-2-4c(1+\mathrm{e})t} \right].  \label{eq:PslM}
\end{equation}
\end{lem}
\begin{proof}
We begin by employing (\ref{eq:prop2dos}): \begin{equation}
\frac{\mathrm{d}}{\mathrm{d}t} \mathbb{P}_{\mathrm{s}}[ l>M] = 2\sum_{l=M+1}^{N^\prime} \varphi_l (K_{l-1} \varphi_{l-1} - K_{l+1}\varphi_{l+1}) = 2K_M \varphi_M \varphi_{M+1} \le 2c(M+1)\varphi_{M+1}.  \label{eq:PsgM}
\end{equation}
In the last inequality, we used (\ref{eq:Kll}) along with $\varphi_l(t)\le 1$ for any $l$.  Hence, to obtain (\ref{eq:PslM}), it suffices to bound $\varphi_{M+1}(t)$. 

Let $K \in \mathbb{R}^{(N^\prime+1) \times (N^\prime+1)}$ correspond to the transition matrix whose entries are \begin{equation}
K_{l^\prime l}(t) = K_l \mathbb{I}(l=l^\prime - 1) - K_{l^\prime}(t) \mathbb{I}(l^\prime=l - 1).
\end{equation}
(indices run from $l=0$ to $l=N^\prime$).   Hence $K$ is tridiagonal and antisymmetric.  Let us define the orthogonal matrix $U(t,t^\prime)$  by the differential equation \begin{equation}
\frac{\mathrm{d}}{\mathrm{d}t} U(t,t^\prime)  = K(t) U(t,t^\prime) , \;\;\;\;\;\; U(t^\prime,t^\prime) = 1.
\end{equation}
$U(t,t^\prime)$ generates the continuous time quantum walk with transition rates $K_l(t)$.

Next, we define the quantum walk transition matrix $\widetilde{K}(t)$ as follows: \begin{equation}
\widetilde{K}_{l^\prime l}(t) := K_l \mathbb{I}(M>l=l^\prime - 1) - K_{l^\prime}(t) \mathbb{I}(M>l^\prime=l - 1).
\end{equation}
This matrix corresponds to excising the sites $l>M$ from the walk.  We define an analogous time evolution operator \begin{equation}
\frac{\mathrm{d}}{\mathrm{d}t} \widetilde{U}(t,t^\prime)  = \widetilde{K}(t) \widetilde{U}(t,t^\prime) , \;\;\;\;\;\; \widetilde{U}(t^\prime,t^\prime) = 1.
\end{equation}

Now we use the following integral identity:\footnote{In the physics literature, this is called the integral form of the memory matrix formula \cite{zwanzig, mori, forster}.} \begin{align}
\varphi_{M+1}(t) &= U_{M+1,0}(t,0) = \widetilde{U}_{M+1,0}(t,0) + \int\limits_0^t \mathrm{d}t^\prime \sum_{l,l^\prime}  U_{M+1,l^\prime}(t,t^\prime) (K_{l^\prime l}(t) - \widetilde{K}_{l^\prime l}(t) )  \widetilde{U}_{l,0}(t^\prime,0).  \label{eq:memorymatrix}
\end{align}
Due to the fact that $\widetilde{U}$ does not evolve into sites with $l>M$, we can immediately simplify (\ref{eq:memorymatrix}):
\begin{equation}
\varphi_{M+1}(t) = \int\limits_0^t \mathrm{d}t^\prime \; K_M(t^\prime) \; U_{M+1,M+1}(t,t^\prime)\widetilde{U}_{M,0}(t^\prime,0).
\end{equation}
Using (\ref{eq:Kll}) along with orthogonality of $U(t,t^\prime)$ and the triangle inequality: \begin{equation}
\varphi_{M+1}(t) \le c(M+1) \int\limits_0^t \mathrm{d}t^\prime \; \widetilde{U}_{M,0}(t^\prime,0).  \label{eq:memorymatrix2}
\end{equation}

We now recognize that \begin{equation}
\widetilde{U}_{M,0}(t^\prime,0) =\widetilde{\varphi}_M(t^\prime) \label{eq:widetildevarphi}
\end{equation}
is the solution to the blocked quantum walk generated by $\widetilde{K}$.  This blocked quantum walk obeys Lemma~\ref{lmael}; integrating (\ref{eq:ddtln}), we obtain \begin{equation}
\mathbb{E}_{\widetilde{\mathrm{s}}} \left[ l^n \right] \le (\mathrm{e}n)^n \left( \mathrm{e}^{4c(1+\mathrm{e})nt}-1 \right).
\end{equation}
Here $\mathbb{E}_{\widetilde{\mathrm{s}}} [\cdots]$ denotes averages in the probability distribution of the blocked quantum walk.  Using Markov's inequality, \begin{equation}
\widetilde{\varphi}_M(t) \le \inf_{n\in\mathbb{Z}^+}  \frac{\mathbb{E}_{\widetilde{\mathrm{s}}} \left[ l^n \right]}{M^n} < \inf_{n\in\mathbb{Z}^+} \left(\frac{\mathrm{e}^{1+4c(1+\mathrm{e})t}n}{M} \right)^n \le \exp\left[1 -M \mathrm{e}^{-2-4c(1+\mathrm{e})t} \right], \label{eq:widetildevarphiM}
\end{equation}
where in the last step we used the following sequence of inequalities for $z\in \mathbb{R}^+$:  \begin{equation}
\inf_{n\in\mathbb{Z}^+} \left(\frac{n}{z}\right)^n \le \left(\frac{1}{z} \left\lfloor \frac{z}{\mathrm{e}}\right\rfloor \right)^{\lfloor z/\mathrm{e}\rfloor} \le \exp \left[  - \left\lfloor \frac{z}{\mathrm{e}}\right\rfloor\right] < \exp \left[ 1- \frac{z}{\mathrm{e}}\right] . 
\end{equation}

Combining (\ref{eq:PsgM}), (\ref{eq:memorymatrix2}), (\ref{eq:widetildevarphi}) and (\ref{eq:widetildevarphiM}), and using the fact that our bound on $\widetilde{\varphi}_M(t)$ is a monotonically increasing function of time,  we obtain (\ref{eq:PslM}).
\end{proof}

The last step is to combine (\ref{eq:ddtln1}) with Lemma \ref{lmalarge} to bound the true Lyapunov exponent.   Defining the non-decreasing functions \begin{subequations}\begin{align}
f_>(l) &:= (l-M) \mathbb{I}[l>M], \\
f_<(l) &:= l - f_>(l),
\end{align}\end{subequations} we write \begin{equation}
\mathbb{E}_{\mathrm{s}}[l] = \mathbb{E}_{\mathrm{s}}[ f_<(l) + f_>(l)]
\end{equation}
and bound each piece separately.   Using the fact that (\ref{eq:PslM}) is an increasing function of $t$: \begin{equation}
\mathbb{E}_{\mathrm{s}}[f_>(l)] \le (N^\prime - M) \mathbb{P}_{\mathrm{s}}[l>M] \le 2\mathrm{e}c^2(M+1)N^\prime t^2 \exp\left[ -M \mathrm{e}^{-2-4c(1+\mathrm{e})t} \right].
\end{equation}
Then using (\ref{eq:fl1fl}) and $f_<(l+1) \le f_<(l) + 1$: \begin{equation}
\frac{\mathrm{d}}{\mathrm{d}t} \mathbb{E}_{\mathrm{s}}[f_<(l)] \le c \sum_{l=0}^{N^\prime } (2l+1) P_l(t) = c\left(2\mathbb{E}_{\mathrm{s}} \left[l \right]  + 1\right).
\end{equation}
We conclude that \begin{equation}
\mathbb{E}_{\mathrm{s}}[l] \le \frac{\mathrm{e}^{2ct}-1}{2} + 2\mathrm{e}c^2(M+1)N^\prime t^2 \exp\left[ -M \mathrm{e}^{-2-4c(1+\mathrm{e})t} \right].
\end{equation}
Using the definition of $\lambda(t)$ in (\ref{eq:lambdadef}) and the concavity of the logarithm, along with $\log x < x$: \begin{equation}
\lambda(t) \le 2c \left[ 1 + \mathrm{e}(M+1)N^\prime c t \exp\left[ -M \mathrm{e}^{-2-4c(1+\mathrm{e})t} \right] \right].
\end{equation}

Let us define \begin{equation}
t = \frac{\log M - 2 - r}{4c(1+\mathrm{e})}.
\end{equation}
Then, using $M+1 \le 2N^\prime$ and $\log M \le N^\prime$: \begin{equation}
\frac{\lambda}{2c} \le 1 + \frac{\mathrm{e} N^{\prime 3} }{2(1+\mathrm{e})} \exp\left[- \mathrm{e}^r \right] < 1 + \frac{ N^{\prime 3} }{2} \exp\left[- \mathrm{e}^r \right].
\end{equation}
Demanding that the inequality in (\ref{eq:lambdadef}) holds and solving for $r$, we obtain (\ref{eq:scramblingtime}).
\end{proof}

This theorem can be interpreted as follows.   For any $0<\kappa<1$, define the operator scrambling time \begin{equation}
t_{\mathrm{s},\kappa} = \inf \left\lbrace t\in \mathbb{R}^+ : \mathbb{E}_{\mathrm{s},t}[l] \ge \kappa N^\prime  \right\rbrace .
\end{equation} 
(Recall that $N^\prime$ is the maximal value of $l$, as defined in (\ref{eq:Nprimedef})).  It was conjectured in \cite{sekino} that a quantum ``scrambling time" $t_{\mathrm{s}} = \mathrm{\Omega}(\log N)$ would necessarily grow at least logarithmically with the number of degrees of freedom in any system with few-body interactions.   For example, in the SYK model, we would demand that $q$ is finite.  In recent years, this operator scrambling time has become the preferred definition of scrambling in the physics literature, though this is likely out of convenience \cite{lucas1805}.
Theorem \ref{lyapunovtheor} implies that $t_{\mathrm{s},\kappa} = \mathrm{O}(\log N)$, as summarized in the following corollary:
\begin{cor} \label{corolscramble}
If (\ref{eq:Kll}) holds for $M = N^\alpha$ for $0<\alpha<1$, then there exists an $N$-independent $b \in \mathbb{R}^+$ for which \begin{equation}
t_{\mathrm{s},\kappa } \ge b \log N \label{eq:scrambleb}
\end{equation}
for $N>N_0$, for some finite $N_0\in \mathbb{Z}^+$.
\end{cor}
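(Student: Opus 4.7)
The plan is to derive the corollary as a direct consequence of Theorem \ref{lyapunovtheor} by carefully tracking how the time window and growth rate furnished by that theorem compete against the threshold $\kappa N'$. Fix any $\epsilon \in \mathbb{R}^+$ (for concreteness $\epsilon = 1$), and apply Theorem \ref{lyapunovtheor} with $M = N^\alpha$. The theorem then guarantees $\lambda(t) \le 2c(1+\epsilon)$ for
\begin{equation}
|t| < t_\star(N) := \frac{1}{4c(1+\mathrm{e})}\left[\alpha \log N - 2 - \log\log \frac{N'^3}{2\epsilon}\right].
\end{equation}
Since $q$ is fixed and $N' \le N$, we have $\log\log(N'^3/(2\epsilon)) = \log\log N + O(1)$ as $N \to \infty$, so $t_\star(N) = \frac{\alpha \log N}{4c(1+\mathrm{e})}(1 + o(1))$.

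Next I would turn the Lyapunov bound into a bound on $\mathbb{E}_{\mathrm{s},t}[l]$. From the definition of $\lambda(t)$ in (\ref{eq:lambdadef}),
\begin{equation}
\mathbb{E}_{\mathrm{s},t}[l] = \exp(\lambda(t)\, t) \le \exp\bigl(2c(1+\epsilon)\, t\bigr)
\end{equation}
for every $t \in [0, t_\star(N))$. Thus the scrambling condition $\mathbb{E}_{\mathrm{s},t}[l] \ge \kappa N'$ can hold within the valid time window only if
\begin{equation}
t \ge \frac{\log(\kappa N')}{2c(1+\epsilon)} =: t_{\mathrm{gr}}(N).
\end{equation}
Because $N' = \Theta(N)$ (again with $q$ fixed), $t_{\mathrm{gr}}(N) = \frac{\log N}{2c(1+\epsilon)}(1 + o(1))$.

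The corollary then follows by choosing
\begin{equation}
b := \tfrac{1}{2}\min\!\left\{\frac{\alpha}{4c(1+\mathrm{e})},\; \frac{1}{2c(1+\epsilon)}\right\},
\end{equation}
so that $b \log N < \min(t_\star(N),\, t_{\mathrm{gr}}(N))$ for all $N > N_0$, with $N_0$ chosen large enough to absorb the $\log\log N$ and $O(1)$ corrections above. For such $N$ and any $t \le b \log N$ we have simultaneously that Theorem \ref{lyapunovtheor} applies and that $\mathbb{E}_{\mathrm{s},t}[l] < \kappa N'$, hence $t_{\mathrm{s},\kappa} > b \log N$, which is (\ref{eq:scrambleb}).

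No step here is genuinely hard; the only subtlety, and thus the only place where one must be careful, is bookkeeping of the subleading $\log\log N$ and constant terms in $t_\star(N)$ so that the chosen $b$ really is $N$-independent. Choosing $b$ a factor of two below the asymptotic threshold (as above) absorbs these corrections cleanly for $N > N_0$.
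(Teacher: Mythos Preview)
Your proposal is correct and follows essentially the same approach as the paper's own proof: apply Theorem~\ref{lyapunovtheor} with $M=N^\alpha$, use the Lyapunov bound to control $\mathbb{E}_{\mathrm{s},t}[l]$ on the resulting time window, and choose $b$ as a fixed fraction of the leading $\log N$ coefficient so that both the time-window constraint and the $\kappa N'$ threshold are satisfied for large $N$. The paper takes the specific choice $b=\alpha/(8c(1+\mathrm{e}))$ (with an implicit $\epsilon$) and checks the threshold directly, while you take an explicit $\min$ of the two competing constraints; the difference is purely cosmetic.
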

\begin{proof}
There exists an $N_*\in\mathbb{Z}^+$ such that \begin{equation}
\frac{\alpha \log N_* }{8c(1+\mathrm{e})} < \frac{1}{4c(1+\mathrm{e})} \left[ \alpha \log N_* - 2 -\log\log N_*^{\prime 3}\right].
\end{equation}
Suppose that $N>N_*$.   Using Theorem \ref{lyapunovtheor}, we conclude that at time $t=t_{\mathrm{s}}$, where $t_{\mathrm{s}}$ is given by (\ref{eq:scrambleb}) where  \begin{equation}
b :=\frac{\alpha }{8c(1+\mathrm{e})} ,
 \end{equation}
 \begin{equation}
 \mathbb{E}_{\mathrm{s}}[l] \le \exp\left[ \frac{3\alpha \log N}{8(1+\mathrm{e})}\right] = N^{3\alpha/8(1+\mathrm{e})}
 \end{equation}
 We conclude that the corollary holds so long as $N_0$ is chosen such that $\kappa N_0 > N_0^{3\alpha/8(1+\mathrm{e})}$ and $N_0 \ge N_*$.
\end{proof}

We emphasize that the results of this section are completely general, and apply to a large family of models beyond the SYK model, as soon as (\ref{eq:Kll}) can be proved.

\section{Operator growth in the SYK ensemble}

\subsection{Bounding the transition rates}\label{sec:trans}
What remains is to show that (\ref{eq:Kll}) holds in the SYK ensemble, with very high probability, at large $N$.   Proving this fact constitutes the second main result of this paper.   The result is summarized in the following theorem:
\begin{thm}\label{theorSYK}
Let $\kappa \in \mathbb{R}^+$ and $\theta \in \mathbb{R}^+$ obey \begin{subequations}\begin{align}
2\kappa \log N + 2 &< \sqrt{N},   \label{eq:kappasqrtNN} \\
2(q-2) &< N^{\kappa\theta} - 1, \\
2q (1+\sqrt{N})&<(q\kappa \log N - 1) \sqrt{N} . \label{eq:annoying} 
\end{align}\end{subequations}
Let us also assume that \begin{equation}
q < \frac{N}{2}.  \label{eq:qlessN2}
\end{equation}Then, in the SYK model introduced in Section \ref{sec:ensemble}, with probability at least \begin{equation}
\mathbb{P}_{\mathrm{success}} \ge 1 - \frac{2(q-2)}{N^{\kappa \theta}-1},
\end{equation}
(\ref{eq:Kll}) is obeyed with 
\begin{subequations}\label{eq:theorcm}\begin{align}
 c &= \mathrm{e}^{\theta + 1/\kappa} \left[\sqrt{\frac{2(q-2)}{q}} \left(1-\frac{2\theta}{5\kappa \sqrt{N}\log N}\right)^{-1} + \frac{8}{q^{q-9/2}N^{1/4}} \left(\frac{4\theta}{5\kappa \log N} \right)^{(q-3)/2}  \right] , \\
 M &= \left\lfloor  \frac{\theta}{5\kappa} \frac{\sqrt{N}}{q^3\log N} \right\rfloor -1 .
\end{align}\end{subequations}
\end{thm}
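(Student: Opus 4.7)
The plan is to bound each block off-diagonal $\mathcal{L}_{s,s'} := \mathbb{Q}_{s'}\mathcal{L}\mathbb{Q}_s$ via a Rademacher concentration argument, then sum over admissible hops $s' - s = q + 2 - 4k$ (which are the only ones allowed by Proposition~\ref{propk}) and union-bound over $s \le M(q-2)+1$. The key algebraic input is a second-moment calculation. Writing $\mathcal{L}_{s,s'} = \sum_{X \in F} J_X B_X$ with deterministic $B_X := \mathrm{i}^{q/2+1}\mathbb{Q}_{s'}[\psi_X, \cdot]\mathbb{Q}_s$, the arguments in the proof of Proposition~\ref{propk} show that $B_X|Y) = \pm 2|X \triangle Y)$ whenever $|X \cap Y| = 2k-1$ and $B_X|Y) = 0$ otherwise, so $B_X^\dagger B_X$ is four times a projector onto those $|Y)$. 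Summing over $X \in F$ and invoking the $S_N$-symmetry of the SYK ensemble yields
\begin{equation*}
\mathbb{E}\bigl[\mathcal{L}_{s,s'}^\dagger \mathcal{L}_{s,s'}\bigr] = 4\sigma^2 \binom{s}{2k-1}\binom{N-s}{q-2k+1}\mathbb{Q}_s =: v_{s,k}\mathbb{Q}_s,
\end{equation*}
with the analogous identity for $\mathcal{L}_{s,s'}\mathcal{L}_{s,s'}^\dagger$ under $s \leftrightarrow s'$. In fact each individual column satisfies $\|\mathcal{L}_{s,s'}|Y)\|^2 = v_{s,k}$ deterministically (different $X$'s send $|Y)$ to distinct basis vectors). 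Inserting $\sigma^{-2} = 2q\binom{N-1}{q-1}$ for $k=1$ gives $\sqrt{v_{s,1}} = \sqrt{2s/q}$ up to a binomial ratio bounded by $(1-(q-1)(s-1)/(N-1))^{-1/2}$, which already matches the $\sqrt{2(q-2)/q}$ prefactor in (\ref{eq:theorcm}) and its $(1-\,\cdot\,)^{-1}$ correction.

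Next I would concentrate $\|\mathcal{L}_{s,s'}\|$ at $\sqrt{v_{s,k}}$ through integer moments. For $p \in \mathbb{Z}^+$,
\begin{equation*}
\mathbb{E}\bigl[\|\mathcal{L}_{s,s'}\|^{2p}\bigr] \le \mathbb{E}\bigl[\mathrm{tr}\bigl(\mathcal{L}_{s,s'}\mathcal{L}_{s,s'}^\dagger\bigr)^p\bigr] = \sum_{(X_1,\ldots,X_{2p}) \in F^{2p}} \mathbb{E}[J_{X_1}\cdots J_{X_{2p}}]\,\mathrm{tr}\bigl(B_{X_1}B_{X_2}^\dagger \cdots B_{X_{2p-1}}B_{X_{2p}}^\dagger\bigr).
\end{equation*}
By Rademacher independence, $\mathbb{E}[J_{X_1}\cdots J_{X_{2p}}] = \sigma^{2p}$ iff every $X \in F$ appears an even number of times among $X_1,\ldots,X_{2p}$, and is zero otherwise. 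The dominant ``pair-matched'' contributions (each $X$ used exactly twice) reduce to counting closed length-$2p$ walks on the graph whose vertices are size-$s$ basis operators and whose edges are labeled by $F$, summing to $\binom{N}{s}v_{s,k}^p$ up to a Catalan-like polynomial in $p$. Higher-multiplicity walks (some edge used $\ge 4$ times) are subleading so long as $p$ grows more slowly than $\sqrt{r_s}$ with $q$-dependent adjustments. A Markov inequality with $p = \lceil\kappa\log N\rceil$ then yields, with probability $\ge 1 - \mathrm{O}(N^{-\kappa\theta})$,
\begin{equation*}
\|\mathcal{L}_{s,s'}\| \le \sqrt{v_{s,k}}\,\mathrm{e}^{\theta + 1/\kappa}(1 + \text{small}),
\end{equation*}
accounting for the exponential prefactor in (\ref{eq:theorcm}).

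Finally, for each $s \in R_l$ I would sum over the $\lfloor q/4 \rfloor$ admissible values of $k$. The $k=1$ term gives $\sqrt{v_{s,1}} \le \sqrt{2(q-2)/q}\sqrt{l+1} \le \sqrt{2(q-2)/q}(l+1)$, providing the leading summand of $c$; the $k \ge 2$ contributions are geometrically smaller by factors $(sq/N)^{k-1}$ up to combinatorial prefactors, and the geometric sum truncated at $k \approx (q-2)/4$ generates the second summand in (\ref{eq:theorcm}) with its $q^{9/2-q}N^{-1/4}$ overall factor and $(4\theta/(5\kappa\log N))^{(q-3)/2}$ dependence. Taking the maximum over $s \in R_l$ required by (\ref{eq:Kldef}) and union-bounding over $0 \le l \le M$ delivers the stated success probability $1 - 2(q-2)/(N^{\kappa\theta}-1)$. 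The precise value of $M$ follows from demanding that the relative correction $(q-1)s/N \lesssim Mq^2/N$ stay below $\theta/(5\kappa\sqrt{N}\log N)$ (to control the $(1-\,\cdot\,)^{-1}$ factor in $c$) together with $q \ll N$ (so the binomial approximations close), producing (\ref{eq:kappasqrtNN})--(\ref{eq:annoying}).

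The main technical obstacle is bounding the non-pair-matched contributions to $\mathbb{E}[\mathrm{tr}(\mathcal{L}_{s,s'}\mathcal{L}_{s,s'}^\dagger)^p]$: each such term still carries weight $\sigma^{2p}$, so a uniform-in-$p$ combinatorial estimate is required for the number of higher-multiplicity closed walks on the $F$-labeled graph of size-$s$ operators. This counting is the ``$\mathrm{O}(\log N)$-loop'' classical combinatorial calculation advertised in the introduction, and it is precisely this that forces the scaling $M \sim \sqrt{N}/(q^3 \log N)$ and the somewhat intricate side conditions of the theorem.
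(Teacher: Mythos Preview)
Your overall architecture (moment method on $\mathrm{tr}(M_{s's}^p)$, Markov, union bound over $s$) matches the paper's, and your second--moment identity $\mathbb{E}[\mathcal{L}_{s,s'}^\dagger\mathcal{L}_{s,s'}]=v_{s,k}\mathbb{Q}_s$ is correct and indeed produces the $\sqrt{2(q-2)/q}$ leading coefficient. But there is a genuine gap in the concentration step.

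\textbf{The choice of $p$ is too small.} You take $p=\lceil\kappa\log N\rceil$, yet you also (correctly) record the trace prefactor $\binom{N}{s}$ coming from the number of basis vectors in $\mathbb{Q}_s$. Markov then gives
\[
\mathbb{P}\!\left[\mu_{s's}\ge v_{s,k}\,\mathrm{e}^{2(\theta+1/\kappa)}\right]\le \binom{N}{s}\,\mathrm{e}^{-2p(\theta+1/\kappa)}=\binom{N}{s}\,N^{-2\kappa\theta-2},
\]
which is \emph{large} as soon as $s>2\kappa\theta+2$. In other words, $p$ of order $\log N$ cannot cancel $\binom{N}{s}\sim N^s$ uniformly in $s$; you need $p=\lceil\kappa s\log N\rceil$, as the paper takes (see (\ref{eq:pkappa})). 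This is exactly why the paper's per--$s$ failure probability is $N^{-\kappa\theta s}$, summing to the geometric series $2(q-2)/(N^{\kappa\theta}-1)$, rather than a bound with an extra factor of $M$.

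\textbf{With $p\propto s\log N$, the pair--matching heuristic is not enough.} Once $p$ is this large, the ``non--pair--matched'' contributions are no longer a perturbation you can dismiss as ``subleading so long as $p$ grows more slowly than $\sqrt{r_s}$'': the number of walks with repeated factors grows combinatorially in $p$, and a naive $(2p-1)!!$ or even Catalan--number count picks up unwanted powers of $p\sim s\log N$. The paper does \emph{not} organize the trace by pairings. Instead it tracks the full sequence of intermediate operators as a walk on \emph{graph isomorphism classes} (Proposition~\ref{propcomb}), symmetrizes using a reversal identity (Proposition~\ref{propsym}), and rewrites the trace as $\langle G_\triangle|\mathcal{M}^p|G_\triangle\rangle$ for a non-negative transfer matrix $\mathcal{M}$ on the (much smaller) space of graph types (Lemma~\ref{lma12}). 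The maximal eigenvalue of $\mathcal{M}$ is then bounded by Collatz--Wielandt with the explicit trial vector $\langle\phi|G\rangle=N^{-|V\cap G|/2}$ (Lemma~\ref{lma13}), which is what produces the exponential correction $\exp[5q^2\kappa s\log N/\sqrt{N}]$ and hence the scaling $M\sim\sqrt{N}/(q^3\log N)$. Your proposal identifies this as ``the main technical obstacle'' but does not supply a mechanism to control it; the transfer--matrix/Collatz--Wielandt step is the missing idea.
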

\begin{proof}
Our strategy will be to work primarily with $\mathcal{K}_{s^\prime s}$.   At the very end of the calculation, we will use (\ref{eq:Kldef}) to bound $\mathcal{K}_l$.   We begin with the following proposition:
\begin{prop}
Define the symmetric and positive semidefinite matrix \begin{equation}
M_{s^\prime s} = \mathbb{Q}_s \mathcal{L}^{\mathsf{T}} \mathbb{Q}_{s^\prime}\mathcal{L}\mathbb{Q}_s.  \label{eq:MssDef}
\end{equation}
If the maximal eigenvalue of $M_{s^\prime s}$ is $\mu_{s^\prime s}$, \begin{equation}
\mathcal{K}_{s^\prime s} = \sqrt{\mu_{s^\prime s}}. \label{eq:Kmu}
\end{equation} \label{propmuss}
\end{prop}
\begin{proof}
 Let $\mathcal{O}\in \mathcal{B}$ obey $\lVert \mathcal{O}\rVert = 1$, and define \begin{equation}
|\mathcal{O}^\prime) = \mathbb{Q}_{s^\prime}\mathcal{L}\mathbb{Q}_s|\mathcal{O}).
\end{equation} From (\ref{eq:endBnorm}) and (\ref{eq:calKss}), we see that $\mathcal{K}_{s^\prime s}$ is simply the maximal length of the vector $|\mathcal{O}^\prime)$.  Now observe that $M_{s^\prime s}$ gives us a very simple way of measuring the length of $|\mathcal{O}^\prime)$.  Therefore,
 \begin{equation}
\mathcal{K}_{s^\prime s}^2 = \sup_{\mathcal{O}\in\mathcal{B}} (\mathcal{O}^\prime |  \mathcal{O}^\prime ) = \sup_{\mathcal{O}\in\mathcal{B}} (\mathcal{O}|M_{s^\prime s}|\mathcal{O}) = \mu_{s^\prime s},
\end{equation}
where for the last equality we used a variational principle which holds for a symmetric matrix.
\end{proof}

Denote \begin{equation}
C_s := \frac{N!}{s!(N-s)!},
\end{equation}
and observe that $M_{s^\prime s} \in \mathbb{R}^{C_s\times C_s}$ is a positive semidefinite random matrix.   From Markov's inequality, for any $p\in\mathbb{Z}^+$, \cite{furedi} 
\begin{equation}
\mathbb{P}\left[\mu_{s^\prime s} \ge a \right] \le \frac{\mathbb{E}\left[ \mu_{s^\prime s}^p\right]}{a^p}\le \frac{\mathbb{E}\left[ \mathrm{tr}(M_{s^\prime s}^p)\right]}{a^p}.  \label{eq:markov4}
\end{equation}
We will choose $p=\mathrm{O}(\log C_s)$, so that the number of eigenvalues $C_s$ accounted for in the trace is irrelevant ($C_s^{1/p} \rightarrow 1$).    Importantly, at finite size $s$, $p = \mathrm{O}(s\log N)$.   We will see that this is sufficiently small to make bounding $\mu_{s^\prime s}$ analytically tractable.

Hence, let us define \begin{equation}
B_{s^\prime s}^{(p)} := \mathbb{E}\left[\mathrm{tr}\left(M^p_{s^\prime s}\right)\right]. \label{eq:BssDef}
\end{equation}
We analyze the average $\mathbb{E}[\cdots]$ over the random variables $J_X$ by converting it to a combinatorial problem.   To do so, let us write out \begin{equation}
B^{(p)}_{s^\prime s} = \mathbb{E}\left[ \sum_{X_1,\ldots, X_p, Y_1,\ldots,Y_p \in F} \sum_{Z\subseteq V} (Z| \prod_{i=1}^p \mathbb{Q}_s \mathcal{L}^{\mathsf{T}}_{X_i}\mathbb{Q}_{s^\prime}\mathcal{L}_{Y_i}\mathbb{Q}_s |Z)   \right] \label{eq:explicittrace}
\end{equation}
where the sum over $Z$ is a sum over the basis of Proposition \ref{prop1}, without loss of generality.  We now read (\ref{eq:explicittrace}) from right to left, starting with $\mathbb{Q}_s|Z)$, which restricts the subset $Z\subseteq V$ to have exactly $s$ elements: $|Z|=s$, and $Z=\lbrace i_1,i_2,\ldots, i_s\rbrace$.  We draw a graph $G$ which we associate to $\mathbb{Q}_s |Z)$: \begin{equation} \label{eq:minimalgraph}
\begin{tikzpicture}
\draw (-0.5,0) node[left] {$\mathbb{Q}_s |i_1\cdots i_s) \sim$};
\draw (1.15, -0.4) -- (0.3, 0.25);
\draw (1.15, -0.4) -- (0.8, 0.25);
\draw (1.15, -0.4) -- (2, 0.25);
\fill[color=orange] (1,-0.5) -- ++(60:0.25) -- ++(-60:0.25) -- cycle;
\fill[color=blue] (0.3, 0.25) circle (3pt);
\fill[color=blue] (0.8, 0.25) circle (3pt);
\fill[color=blue] (2, 0.25) circle (3pt);
\draw (1.4, 0.25) node {$\cdots$};
\draw (0.3, 0.35) node[above] {\color{blue} \footnotesize $i_1$};
\draw (0.8, 0.35) node[above] {\color{blue} \footnotesize $i_2$};
\draw (2, 0.35) node[above] {\color{blue} \footnotesize $i_s$};
\end{tikzpicture}
\end{equation}
where the (blue) circles denote the fermions, and the orange triangle is a ``root" to the graph -- it has edges drawn to the $s$ original fermions in the operator $\mathbb{Q}_s |Z)$ (recall that $|Z)$ corresponds to a product operator).  We have written a $\sim$ in (\ref{eq:minimalgraph}) because we will not bother to keep track of an overall sign in the vector, although its orientation in $\mathcal{B}$ and its dependence on any random variables $J_X$ are each important. For simplicity, let us assume that $s^\prime=s+q-2$.  Without loss of generality, we assume that $Y_p \cap Z = \lbrace i_s\rbrace$,  $Y=\lbrace i_s,j_1,\ldots, j_{q-1}\rbrace$.   Then, we draw 
\begin{equation} \label{eq:graphtree}
\begin{tikzpicture}
\draw (-0.5, 0.3)  node[left] {$\mathbb{Q}_{s^\prime}\mathcal{L}_{Y_p}\mathbb{Q}_s|Y)\sim$};
\draw (1.15, -0.4) -- (0.3, 0.25);
\draw (1.15, -0.4) -- (0.8, 0.25);
\draw (1.15, -0.4) -- (2, 0.25) -- (3, 0.25) -- (2.2, 0.85);
\draw (2.7, 0.85) -- (3,0.25) -- (3.7, 0.85);
\fill[color=orange] (1,-0.5) -- ++(60:0.25) -- ++(-60:0.25) -- cycle;
\fill[color=blue] (0.3, 0.25) circle (3pt);
\fill[color=blue] (0.8, 0.25) circle (3pt);
\fill[color=blue] (2, 0.25) circle (3pt);
\fill[color=blue] (2.2, 0.85) circle (3pt);
\fill[color=blue] (2.7, 0.85) circle (3pt);
\fill[color=blue] (3.7, 0.85) circle (3pt);
\fill[color=red] (2.9, 0.15) rectangle (3.1, 0.35);
\draw (1.4, 0.25) node {$\cdots$};
\draw (3.2, 0.85) node {$\cdots$};
\draw (0.3, 0.35) node[above] {\color{blue} \footnotesize $i_1$};
\draw (0.8, 0.35) node[above] {\color{blue} \footnotesize $i_2$};
\draw (2, 0.35) node[above] {\color{blue} \footnotesize $i_s$};
\draw (2.2, 0.95) node[above] {\color{blue} \footnotesize $j_1$};
\draw (2.7, 0.95) node[above] {\color{blue} \footnotesize $j_2$};
\draw (3.7, 0.95) node[above] {\color{blue} \footnotesize $j_{q-1}$};
\draw (3, 0.15) node[below] {\color{red} \footnotesize $Y_p$};
\end{tikzpicture}.
\end{equation}
The way to read this graph is as follows:  the coupling (factor, drawn as a red square) $Y_p$ connected to the fermion $i_s$, and spawned the fermions $j_1,\ldots, j_{q-1}$.   Each fermion (circle) with an odd degree is present in the operator; those with an even degree are not present.   Because of the projectors $\mathbb{Q}_{s^\prime}$ and $\mathbb{Q}_s$, we had to start with an operator of size $s$ and add exactly $q-2$ net fermions.    From (\ref{eq:psiApsiB}), we know that  this vector is proportional to one of our simple basis vectors (a product operator), which is why we can simply draw the graph (so long as we neglect the proportionality coefficient).  The fermions do not directly connect to each other, but rather connect through the factors.   

Let us continue and study the operator $\mathbb{Q}_s \mathcal{L}^{\mathsf{T}}_{X_i}\mathbb{Q}_{s^\prime}\mathcal{L}_{Y_i}\mathbb{Q}_s |Z)$.  It is easiest to first illustrate the possibilities with a simple example.  Consider the theory with $s=3$, $s^\prime=5$, $q=4$.  Let us first consider the theory where $X_p=Y_p = \lbrace 3,4,5,6\rbrace$ and $Z=\lbrace 1,2,3\rbrace$.   Then we draw 
\begin{equation}\label{eq:Xpexample1}
\begin{tikzpicture}
\draw (-0.5,0) node[left] {$\mathbb{Q}_s \mathcal{L}^{\mathsf{T}}_{X_p}\mathbb{Q}_{s^\prime}\mathcal{L}_{Y_p}\mathbb{Q}_s |Z)\sim$};
\draw (1.15, -0.4) -- (0.3, 0.25);
\draw (1.15, -0.4) -- (1.15, 0.25);
\draw (1.15, -0.4) -- (2, 0.25);
\fill[color=orange] (1,-0.5) -- ++(60:0.25) -- ++(-60:0.25) -- cycle;
\fill[color=blue] (0.3, 0.25) circle (3pt);
\fill[color=blue] (1.15, 0.25) circle (3pt);
\fill[color=blue] (2, 0.25) circle (3pt);
\draw (0.3, 0.35) node[above] {\color{blue} \footnotesize $1$};
\draw (1.15, 0.35) node[above] {\color{blue} \footnotesize $2$};
\draw (2, 0.35) node[above] {\color{blue} \footnotesize $3$};
\end{tikzpicture}
\end{equation}
where the absence of the factor $Y_p$ reminds us that since $J_{Y_p}$ has appeared twice in the sequence, this sequence is non-trivial under random averaging.   We neglect to draw any fermion or factor which has degree zero, which is why the fermions 4, 5 and 6 are not shown.   

However, suppose instead $X_p = \lbrace 2,4,5,7\rbrace$.  In this case, 
\begin{equation}\label{eq:Xpexample2}
\begin{tikzpicture}
\draw (-0.5,0.3) node[left] {$\mathbb{Q}_s \mathcal{L}^{\mathsf{T}}_{X_p}\mathbb{Q}_{s^\prime}\mathcal{L}_{Y_p}\mathbb{Q}_s |Z)\sim$};
\draw (1.15, -0.4) -- (0.3, 0.25);
\draw (1.15, -0.4) -- (1.15, 0.25) -- (1.8, 1.1) -- (1, 1.1);
\draw (1.15, -0.4) -- (2, 0.25) -- (3, 0.25) -- (2.3, 0.85);
\draw (3, 0.85) -- (3,0.25) -- (3.7, 0.85);
\draw (3, 0.85) -- (1.8, 1.1) -- (2.3, 0.85);
\fill[color=orange] (1,-0.5) -- ++(60:0.25) -- ++(-60:0.25) -- cycle;
\fill[color=blue] (0.3, 0.25) circle (3pt);
\fill[color=blue] (1.15, 0.25) circle (3pt);
\fill[color=blue] (2, 0.25) circle (3pt);
\fill[color=blue] (2.3, 0.85) circle (3pt);
\fill[color=blue] (3, 0.85) circle (3pt);
\fill[color=blue] (3.7, 0.85) circle (3pt);
\fill[color=blue] (1,1.1) circle (3pt);
\fill[color=red] (2.9, 0.15) rectangle (3.1, 0.35);
\fill[color=red] (1.7, 1) rectangle (1.9, 1.2);
\draw (0.3, 0.35) node[above] {\color{blue} \footnotesize $1$};
\draw (1.15, 0.35) node[above] {\color{blue} \footnotesize $2$};
\draw (2, 0.35) node[above] {\color{blue} \footnotesize $3$};
\draw (2.3, 0.95) node[above] {\color{blue} \footnotesize $4$};
\draw (3, 0.95) node[above] {\color{blue} \footnotesize $5$};
\draw (3.7, 0.95) node[above] {\color{blue} \footnotesize $6$};
\draw (1,1.2) node[above] {\color{blue} \footnotesize $7$};
\draw (1.8, 1.2) node[above] {\color{red} \footnotesize $X_p$};
\draw (3, 0.15) node[below] {\color{red} \footnotesize $Y_p$};
\end{tikzpicture}
\end{equation}
Because the factors $X_p \ne Y_p$, we must draw both of them, together with an edge to all vertices/fermions $i\in X_p$ or $Y_p$.   We can only remove a factor when that exact factor shows up a second time.   And if a factor shows up a third time, it is redrawn in, and so on.   Note that the only factors that $X_p$ can be are those which destroy 3 fermions and create one, in this simple example. 

It is straightforward to generalize these rules, which we summarize one more time.  If the next factor in the sequence is present in the existing factor graph,  that factor is deleted along with its edges to $q$ fermions.  Any fermions which subsequently have degree zero are removed.  If the factor is new, we draw that factor and $q$ edges to its fermions.  The number of odd degree fermions in each graph is fixed by the projectors to alternate between $s$ and $s^\prime$.  


Our next goal is to throw away detailed information about what specific factors and fermions appeared, and to only keep track of the sequence of graphs.  Let $\mathcal{G}$ be the space of all graphs, modulo graph isomorphism.  Two graphs $G_1$ and $G_2$ are isomorphic if and only if there is a permutation on fermions $\pi \in \mathrm{S}^V$ such that $\pi \cdot G_1 = G_2$ (the group action on fermions and factors is canonical, while the root is invariant).   We define $G_\triangle$ to be the unique (up to isomorphism) element of $\mathcal{G}$  with zero factors and $s$ fermions connected to the root, as in (\ref{eq:minimalgraph}).  Let $\mathcal{G}_-$ be the subset of $\mathcal{G}$ consisting of $s$ odd degree fermions and no more than $p$ factors, and $\mathcal{G}_+$ be the subset of $\mathcal{G}$ with $s^\prime$ odd degree fermions, subject to the constraint that any graph in $\mathcal{G}_+$ or $\mathcal{G}_-$ can be reached by adding and removing factors to $G_\triangle$, according to the rules above, and with no intermediate graphs containing more than $p$ factors.  

Define $\langle G_2| \mathcal{N}_+|G_1 \rangle$ to be the number of factors $X$ which can be added (or removed) to any fixed graph $G$ isomorphic to $G_1 \in \mathcal{G}_-$, to create any graph isomorphic to $G_2 \in \mathcal{G}_+$.   Similarly, we define  $\langle G_1| \mathcal{N}_-|G_2 \rangle$ to be the number of factors which take a fixed graph isomorphic to $G_2\in \mathcal{G}_+$ to any graph isomorphic to $G_1 \in \mathcal{G}_-$.   We interpret $\mathcal{N}_+ : \mathbb{Z}^{\mathcal{G}_+ \times \mathcal{G}_-} \rightarrow \mathbb{Z}$ and $\mathcal{N}_- : \mathbb{Z}^{\mathcal{G}_- \times \mathcal{G}_+} \rightarrow \mathbb{Z}$ as integer-valued matrices, using the angle bra-ket notation to denote matrix elements.   Many of these matrix elements are zero.  $\mathcal{N}_+$ and $\mathcal{N}_-$ are both non-negative matrices.

\begin{prop}
\label{propcomb}
If  $G_0 = G_p = G_\triangle$, then \begin{equation}
B_{s^\prime s}^{(p)} \le \left(4\sigma^2\right)^p C_s \sum_{ \substack{G_1,\ldots, G_{p-1} \in \mathcal{G}_- \\H_1,\ldots, H_p \in \mathcal{G}_+ } }\prod_{i=1}^p \langle G_i | \mathcal{N}_-|H_i\rangle\langle H_i|\mathcal{N}_+|G_{i-1}\rangle . \label{eq:combineq}
\end{equation}
\end{prop}

\begin{proof} We expand out the sums in (\ref{eq:explicittrace}) over all possible couplings.  Using the algorithm described above to associate a graph with each of the $2p+1$ operators $\mathbb{Q}_s|Z)$, $\mathbb{Q}_{s^\prime}\mathcal{L}_{Y_p}\mathbb{Q}_s|Z)$, etc., we may convert the sequence of factors \begin{equation}
\mathcal{Y}_Z := (Y_p, X_p, Y_{p-1},X_{p-1},\ldots, Y_1,X_1)_Z \label{eq:calYZ}
\end{equation}
read right to left, along with the initial  operator $|Z)$, into a sequence of $2p+1$ graphs \begin{equation}
\mathcal{Z}(\mathcal{Y}_Z) := (G_0, H_1, G_1, H_2,\ldots, H_p, G_p) \label{eq:calZdef}
\end{equation}
with $G_0=G_\triangle$.  Due to the projectors $\mathbb{Q}_s$ and $\mathbb{Q}_{s^\prime}$, any sequence $\mathcal{Y}$ which (before disorder averaging) is not zero must map to a sequence $\mathcal{Z}(\mathcal{Y}_Z)$ in which $G_i \in \mathcal{G}_-$ and $H_i \in \mathcal{G}_+$.  When calculating $B_{s^\prime s}^{(p)}$,  we require that $G_p = G_\triangle$;  otherwise, there is a coupling which appears an odd number of times in $\mathcal{Y}$, so the disorder average of that sequence vanishes.   We define \begin{equation}
\mathcal{Z}_{s^\prime s}^{(p)} := \lbrace (G_\triangle, H_1, G_1, H_2,\ldots, H_p, G_\triangle) : H_i \in \mathcal{G}_+,G_i \in \mathcal{G}_- \rbrace. 
\end{equation}   Since only one factor can be added or removed in each step, it is not possible to have more than $p$ factors in any graph in $\mathcal{Z}(\mathcal{Y}_Z)$.   We define the equivalence relation $\mathcal{Y}_1 \sim \mathcal{Y}_2$ if and only if $\mathcal{Z}(\mathcal{Y}_1) = \mathcal{Z}(\mathcal{Y}_2)$, and denote $\mathcal{Y}_{1} \in \mathcal{Z}(\mathcal{Y})$.

We then write \begin{equation}
B_{s^\prime s}^{(p)} = \sum_{Z\subseteq V} \sum_{\mathcal{Z} \in \mathcal{Z}_{s^\prime s}^{(p)}} \sum_{\mathcal{Y}_Z\in \mathcal{Z}}  (Z| \prod_{i=1}^p \mathbb{Q}_s \mathcal{L}^{\mathsf{T}}_{X_i}\mathbb{Q}_{s^\prime} \mathcal{L}_{Y_i}\mathbb{Q}_s |Z).
\end{equation}
Due to the Rademacher distribution on the random variables $J_X$, the expectation value has become trivial, encoded in the fact that the graph sequence $\mathcal{Z}$ ends at $G_\triangle$.  For other distributions on $J_X$, the sum above must be weighted in a more complicated way when the sum involves $\mathbb{E}[J_X^{2k}]$ for $k>1$.   We now apply the triangle inequality together with $(Z|Z)=1$: \begin{equation}
B_{s^\prime s}^{(p)} \le   \sum_{Z\subseteq V : |Z|=s} \sum_{\mathcal{Z} \in \mathcal{Z}_{s^\prime s}^{(p)}} \sum_{\mathcal{Y}_Z\in \mathcal{Z}} \prod_{i=1}^p \lVert \mathcal{L}_{X_i}\rVert  \lVert \mathcal{L}_{Y_i}\rVert =  \sum_{\mathcal{Z} \in \mathcal{Z}_{s^\prime s}^{(p)}}   \sum_{Z\subseteq V : |Z|=s} \sum_{\mathcal{Y}_Z\in \mathcal{Z}} \left(2\sigma\right)^{2p}.
\end{equation}
In the last step, we used (\ref{eq:psiApsiB}) along with the fact that we may exchange the first two sums, whose summands are independent.  It now remains to evaluate each sum in turn.   By definition of $\mathcal{N}_+$ and $\mathcal{N}_-$: \begin{equation}
\sum_{\mathcal{Y}_Z\in \mathcal{Z}} \left(2\sigma\right)^{2p} = \left(4\sigma^2\right)^p \prod_{i=1}^p \langle G_i |\mathcal{N}_-|H_i\rangle \langle H_i | \mathcal{N}_+ |G_{i-1}\rangle,
\end{equation}
using (\ref{eq:calZdef}) and $G_0=G_\triangle$.   By permutation symmetry, \begin{equation}
\sum_{Z\subseteq V : |Z|=s}\left(4\sigma^2\right)^p \prod_{i=1}^p \langle G_i |\mathcal{N}_-|H_i\rangle \langle H_i | \mathcal{N}_+ |G_{i-1}\rangle = \left(4\sigma^2\right)^pC_s \prod_{i=1}^p \langle G_i |\mathcal{N}_-|H_i\rangle \langle H_i | \mathcal{N}_+ |G_{i-1}\rangle
\end{equation}
Hence we obtain (\ref{eq:combineq}).
\end{proof}

\begin{prop}
\label{propsym}
Let $\mathcal{Z} = (G_\triangle, H_1, G_1, \ldots, G_{p-1}, H_p, G_\triangle) \in \mathcal{Z}_{s^\prime s}^{(p)}$.  Then  \begin{equation}
\prod_{i=1}^p \langle G_i | \mathcal{N}_-|H_i\rangle\langle H_i|\mathcal{N}_+|G_{i-1}\rangle = \prod_{i=1}^p \langle G_{i-1} | \mathcal{N}_-|H_i\rangle\langle H_i|\mathcal{N}_+|G_{i}\rangle .\label{eq:symid}
\end{equation}
\end{prop}
\begin{proof}
Pick any $Z\subseteq V$ with $|Z|=s$.  Given a sequence of factors $\mathcal{Y}_Z$, given by (\ref{eq:calYZ}), with $\mathcal{Y}_Z \in \mathcal{Z}$, define the reversed sequence \begin{equation}
\mathcal{Y}_Z^{\mathrm{r}} := (X_1,Y_1,\ldots, X_p, Y_p)_Z.
\end{equation}
which corresponds to factors in (\ref{eq:explicittrace}) read left to right instead.  By construction, $\mathcal{Y}_Z^{\mathrm{r}} \in \mathcal{Z}^{\mathrm{r}}$, defined by \begin{equation}
\mathcal{Z}^{\mathrm{r}} := (G_\triangle, H_p, G_{p-1}, \ldots, G_1, H_1, G_\triangle).
\end{equation}
Clearly, $\left(\mathcal{Y}_Z^{\mathrm{r}}\right)^{\mathrm{r}} = \mathcal{Y}_Z$ and $\left(\mathcal{Z}^{\mathrm{r}}\right)^{\mathrm{r}} = \mathcal{Z}$.  As each sequence $\mathcal{Y}_Z$ has a unique reverse $\mathcal{Y}_Z^{\mathrm{r}}$, \begin{equation}
\prod_{i=1}^p \langle G_i | \mathcal{N}_-|H_i\rangle\langle H_i|\mathcal{N}_+|G_{i-1}\rangle = \sum_{\mathcal{Y}_Z \in \mathcal{Z}} 1 = \sum_{\mathcal{Y}_Z^{\mathrm{r}} \in \mathcal{Z}^{\mathrm{r}}} 1 = \prod_{i=1}^p \langle G_{i-1} | \mathcal{N}_-|H_i\rangle\langle H_i|\mathcal{N}_+|G_{i}\rangle ,
\end{equation}
which completes the proof.
\end{proof}

\begin{lem}\label{lma12}
Define a transfer matrix $\mathcal{M}_{s^\prime s}^{(p)}\in \mathbb{R}^{\mathcal{G}_s^{(p)} \times \mathcal{G}_s^{(p)}} $ component wise as \begin{equation} 
\langle G_1|\mathcal{M}_{s^\prime s}^{(p)} |G_2\rangle = \sum_{H\in \mathcal{G}_{s^\prime}^{(p)}} \langle G_1|\mathcal{N}_-|H\rangle\langle H|\mathcal{N}_+|G_2\rangle.
\end{equation}
Then if $\nu_{s^\prime s}^{(p)}$ is the maximal (left or right) eigenvalue of $\mathcal{M}_{s^\prime s}^{(p)}$,
\begin{equation}
B_{s^\prime s}^{(p)} \le C_s \left(\nu_{s^\prime s}^{(p)}\right)^p.  \label{eq:Bfinalbound}
\end{equation}
\end{lem}
\begin{proof}
Rewriting (\ref{eq:combineq}) in terms of the transfer matrix: \begin{equation}
B_{s^\prime s}^{(p)} \le \langle G_\triangle | \left(\mathcal{M}_{s^\prime s}^{(p)} \right)^p |G_\triangle\rangle.
\end{equation}
Now, letting $G_0=G_p=G_\triangle$, and using the property that \begin{equation}
\sum_{H_1,\ldots, H_p \in \mathcal{G}_+}\prod_{i=1}^p \langle G_i | \mathcal{N}_-|H_i\rangle\langle H_i|\mathcal{N}_+|G_{i-1}\rangle =\sum_{H_1,\ldots, H_p \in \mathcal{G}_+} \prod_{i=1}^p \langle G_{i-1} | \mathcal{N}_-|H_i\rangle\langle H_i|\mathcal{N}_+|G_{i}\rangle \label{eq:presymmetrizing}
\end{equation} which follows from Proposition \ref{propsym},
\begin{align}
B_{s^\prime s}^{(p)} &\le \left(4\sigma^2\right)^p C_s  \sum_{ \substack{G_1,\ldots, G_{p-1} \in \mathcal{G}_s^{(p)} \\H_1,\ldots, H_p \in \mathcal{G}_{s^\prime}^{(p)} } }\prod_{i=1}^p \langle G_i | \mathcal{N}_-|H_i\rangle\langle H_i|\mathcal{N}_+|G_{i-1}\rangle \notag \\
&= C_s  \sum_{   G_1,\ldots, G_{p-1} \in \mathcal{G}_- }\prod_{i=1}^p \left[4\sigma^2 \sum_{H_i \in \mathcal{G}_+}  \langle G_i | \mathcal{N}_-|H_i\rangle\langle H_i|\mathcal{N}_+|G_{i-1}\rangle\right] \notag \\
&= C_s \sum_{   G_1,\ldots, G_{p-1} \in \mathcal{G}_- }\prod_{i=1}^p \left[4\sigma^2 \sqrt{\sum_{H_i \in \mathcal{G}_+}  \langle G_i | \mathcal{N}_-|H_i\rangle\langle H_i|\mathcal{N}_+|G_{i-1}\rangle} \sqrt{\sum_{H_i \in \mathcal{G}_+}  \langle G_{i-1} | \mathcal{N}_-|H_i\rangle\langle H_i|\mathcal{N}_+|G_{i}\rangle} \right] \notag \\
&= C_s \langle G_\triangle | \left(\widetilde{\mathcal{M}}_{s^\prime s}^{(p)} \right)^p |G_\triangle\rangle \label{eq:symmetrizing}
\end{align}
where we have defined the symmetrized transfer matrix \begin{equation}
\langle G_1|\widetilde{\mathcal{M}}_{s^\prime s}^{(p)} |G_2\rangle := \sqrt{\langle G_1|\mathcal{M}_{s^\prime s}^{(p)} |G_2\rangle\langle G_2|\mathcal{M}_{s^\prime s}^{(p)} |G_1\rangle}.
\end{equation}
In the third line of (\ref{eq:symmetrizing}), we have used the distributive property along with (\ref{eq:presymmetrizing}) and the trivial identity that $x=y$ implies $\sqrt{xy}=x$.

Since $\mathcal{M}_{s^\prime s}^{(p)}$ is non-negative, $\widetilde{\mathcal{M}}_{s^\prime s}^{(p)}$ is a symmetric and positive semidefinite and acts on a finite dimensional vector space.  Let $\widetilde{\nu}_{s^\prime s}^{(p)}$ be its maximal eigenvalue.  We conclude (for example, using elementary variational methods) that, since $\langle G_\triangle | G_\triangle \rangle = 1$, \begin{equation}
 \langle G_\triangle | \left(\widetilde{\mathcal{M}}_{s^\prime s}^{(p)} \right)^p |G_\triangle\rangle \le \left(\widetilde{\nu}_{s^\prime s}^{(p)}\right)^p. \label{eq:nutildebound}
\end{equation}

It remains to relate $\widetilde{\nu}_{s^\prime s}^{(p)}$ to $\nu_{s^\prime s}^{(p)}$.  By definition of the set $\mathcal{G}_-$ (and the fact it has a finite number of elements), for any two graphs $G_{1,2} \in \mathcal{G}_-$, there exists an integer $n<\infty$ such that \begin{equation}
\langle G_1 | \left(\mathcal{M}_{s^\prime s}^{(p)}\right)^n |G_2\rangle > 0.
\end{equation}
This identity follows from the fact that there exist a sequence of factors from $G_\triangle$ to $G_1$ and $G_2$, as well as the reverse sequences from $G_2$ or $G_1$ to $G_\triangle$.  Hence $\mathcal{M}_{s^\prime s}^{(p)}$ is an irreducible non-negative matrix, and it follows that $\widetilde{\mathcal{M}}_{s^\prime s}^{(p)}$ is also irreducible.  By the Perron-Frobenius Theorem \cite{meyer}: (\emph{1}) $\mathcal{M}_{s^\prime s}^{(p)}$ has a maximal eigenvector $|\phi \rangle$ and $\widetilde{\mathcal{M}}_{s^\prime s}^{(p)}$ has a maximal eigenvector $|\widetilde{\phi} \rangle$, which each obey \begin{equation}
\langle G_\triangle|\phi\rangle \ne 0 \text{ and } \langle G_\triangle|\widetilde{\phi}\rangle \ne 0; \label{eq:phiGne0}
\end{equation}
(\emph{2}) $\nu_{s^\prime s}^{(p)}$ and $\widetilde{\nu}_{s^\prime s}^{(p)}$ are non-degenerate; (\emph{3}) as stated in the lemma, $\nu_{s^\prime s}^{(p)}$ is the maximal left and maximal right eigenvalue of $\mathcal{M}_{s^\prime s}^{(p)}$.  As $\mathbb{R}^{\mathcal{G}_-}$ is a finite dimensional vector space, (\ref{eq:phiGne0}) implies that
\begin{align}
\nu_{s^\prime s}^{(p)} = \lim_{n\rightarrow \infty} \frac{\log \langle G_\triangle | \left(\mathcal{M}_{s^\prime s}^{(p)}\right)^n |G_\triangle\rangle }{n}= \lim_{n\rightarrow \infty} \frac{\log \langle G_\triangle | \left(\widetilde{\mathcal{M}}_{s^\prime s}^{(p)}\right)^n |G_\triangle\rangle }{n} = \widetilde{\nu}_{s^\prime s}^{(p)}.  \label{eq:nunutilde}
\end{align}
Combining (\ref{eq:nutildebound}) and (\ref{eq:nunutilde}) we obtain (\ref{eq:Bfinalbound}).
\end{proof}

Using Lemma \ref{lma12}, we now begin to bound the maximal eigenvalue of the matrix $M_{s^\prime s}$ defined in (\ref{eq:MssDef}).   Let \begin{equation}
p = \lceil \kappa s \log N \rceil,  \label{eq:pkappa}
\end{equation}
where the parameter $\kappa \in \mathbb{R}^+$ will be O(1).  We first combine (\ref{eq:markov4}) and (\ref{eq:BssDef}).    Using the inequality \begin{equation}
C_s > \left(\frac{\mathrm{e}N}{s}\right)^s,
\end{equation}
we find that since $N\ge 4$ and $s\ge 1$: \begin{equation}
\mathbb{P}\left[\mu_{s^\prime s}> \nu_{s^\prime s}^{(p)} \mathrm{e}^{\theta + 2/\kappa}\right] \le \left(\frac{\mathrm{e}^{-2/\kappa}}{\mathrm{e}^\theta}  \right)^p \left(\frac{\mathrm{e}N}{s}\right)^s = \left(\frac{\mathrm{e}^{-2/\kappa}}{\mathrm{e}^\theta} \left(\frac{\mathrm{e}N}{s}\right)^{1/\kappa \log N} \right)^p  \le \mathrm{e}^{-p\theta} = \frac{1}{N^{\kappa\theta s}}.
\end{equation}
Moreover, since there are at most $2(q-2)$ non-vanishing $\mathcal{K}_{s^\prime s}$ coefficients involving a fixed operator size, we conclude that \begin{align}
\mathbb{P}\left[\mu_{s^\prime s}> \nu_{s^\prime s}^{(p)} \mathrm{e}^{\theta + 2/\kappa}, \text{ for any } s, s^\prime \right] &\le \sum_{|s^\prime -s| \le q-2} \mathbb{P}\left[\mu_{s^\prime s}> \nu_{s^\prime s}^{(p)} \mathrm{e}^{\theta + 2/\kappa}\right] \le    2(q-2) \sum_{s=1}^N \frac{1}{N^{\kappa\theta s}} \notag \\
&< \frac{2(q-2)}{N^{\kappa\theta}-1} =1- \mathbb{P}_{\mathrm{success}}.
\end{align}
Hence, with probability $\mathbb{P}_{\mathrm{success}}$, we may assume that $\mu_{s^\prime s} \le \mathrm{e}^{\theta + 2/\kappa} \nu_{s^\prime s}^{(p)}$.

Of course, it remains to bound $\nu_{s^\prime s}^{(p)}$, which we do in the following lemma:

\begin{lem}\label{lma13}
If $p$ is given by (\ref{eq:pkappa}), $k$ is defined in (\ref{eq:kdef}), and we assume (\ref{eq:kappasqrtNN}), then \begin{equation}
\nu_{s^\prime s}^{(p)} < \left( \frac{s+q-2}{q-1} \frac{2s}{q} \left(\frac{2q^2s}{N}\right)^{2k-2}+ \frac{2^q (q-1)! (s+q)^{q-1}}{N^{(q-2)/2}} \right) \exp\left[ \frac{5q^2 \kappa s \log N}{\sqrt{N}}\right].  \label{eq:lma13}
\end{equation}

\end{lem}
\begin{proof}
Let us interpret $\nu_{s^\prime s}^{(p)}$ as the maximal left eigenvalue of $\mathcal{M}_{s^\prime s}^{(p)}$.  Defining $\mathbb{R}^{\mathcal{G}_-}_+$ as the set of all vectors with strictly positive entries, we begin by invoking the Collatz-Wielandt bound \cite{meyer}: \begin{equation}
\nu_{s^\prime s}^{(p)} = \inf_{|\phi\rangle \in \mathbb{R}^{\mathcal{G}_-}_+} \sup_{G\in \mathcal{G}_-} \frac{\langle \phi| \mathcal{M}_{s^\prime s}^{(p)} | G\rangle}{\langle \phi|G\rangle}.  \label{eq:collatzwieland}
\end{equation}
Clearly, we can bound $\nu_{s^\prime s}^{(p)}$ by simply guessing any $|\phi\rangle \in \mathbb{R}^{\mathcal{G}_-}_+$.   We choose  \begin{equation}
\langle \phi |G\rangle = N^{-|V\cap G|/2}
\end{equation}
where $|V\cap G|$ denotes the number of fermions (of non-zero degree!) in the graph $G$.   

Now let us write out \begin{equation}
\sup_{G\in \mathcal{G}_-} \frac{\langle \phi| \mathcal{M}_{s^\prime s}^{(p)} | G\rangle}{\langle \phi|G\rangle} = \sup_{G\in \mathcal{G}_-}  4\sigma^2\sum_{H^\prime \in \mathcal{G}_+, H\in \mathcal{G}_-} N^{(|G\cap V| - |H\cap V|)/2} \langle H|\mathcal{N}_-|H^\prime\rangle\langle H^\prime | \mathcal{N}_+|G\rangle .  \label{eq:supGsum}
\end{equation}
Given graphs $G$, $H$ and $H^\prime$, let us define the following four parameters: \begin{subequations}\begin{align}
a_+ := |(H^\prime - H^\prime \cap G)\cap V|, \\ 
a_- := |(H - H^\prime \cap H)\cap V|, \\ 
b_+ := |(G - H^\prime \cap G)\cap V|, \\ 
b_- := |(H^\prime - H^\prime \cap H)\cap V|.
\end{align}\end{subequations}
$a_+$ and $a_-$ are the number of \emph{new} fermions added to the graph in the first and second step respectively; $b_+$ and $b_-$ represent the number of fermions \emph{removed} from the graph in the first and second step respectively.   Note the following constraints: \begin{subequations}\label{eq:abbounds}\begin{align}
0\le a_+ \le q+1-2k, \\
0 \le a_- \le 2k-1, \\
0 \le b_+ \le 2k-1, \\
0 \le b_- \le q+1-2k.
\end{align}\end{subequations} \begin{equation}
|G\cap V| - |H\cap V| = b_+ + b_- - a_+ - a_-.  \label{eq:GVHVab}
\end{equation}
Lastly, note that $a_+$ and $a_-$ are non-negative if and only if a factor is added to the graph, and $b_+$ and $b_-$ are non-negative if and only if a factor is removed from the graph in that step.

There are four possible kinds of sequences of $H$ and $H^\prime$, corresponding to whether a factor is added (A) or removed (R) from the graph in each step: RR, AA, RA, AR.  Because we keep the starting graph fixed, and sum over all possible ways to add or remove factors to the graph, we can efficiently overestimate the sum over all possible modifications to the graph with fixed $a_\pm$ or $b_\pm$.   Let \begin{equation}
v=|G\cap V|
\end{equation}
to be the number of vertices in $G$.  In the first step, the number of ways to add a factor is \begin{equation}
N_{\mathrm{A}}(a_+) := \sum_{H^\prime \in \mathcal{G}_+ : |H^\prime \cap V| = a_+ + |G\cap V|} \langle H^\prime|\mathcal{N}_+|G\rangle =   \left(\begin{array}{c} N - v \\ a_+ \end{array}\right) \left(\begin{array}{c} s \\ 2k-1 \end{array}\right) \left(\begin{array}{c} v-s \\ q+1-2k-a_+ \end{array}\right),
\end{equation}
where the first combinatorial factor is the choice of $a_+$ distinct fermions to add to the graph, the second  is the number of $(2k-1)$-tuples of the $s$ odd degree fermions present in the graph, and the third is the number $(q+1-2k-a_+)$-tuples of even degree fermions to add an extra edge to.   If instead we remove a factor, we find \begin{equation}
N_{\mathrm{R}}(b_+) \le  \left\lbrace \begin{array}{ll} \displaystyle \left\lfloor \dfrac{s}{b_+} \right\rfloor &\ b_+ > 0 \\ p &\ b_+ = 0 \end{array}\right.,
\end{equation}
where the first line corresponds to the maximal number of factors that can have $b_+ > 0$ odd degree fermions, and the second line is a crude bound: we can remove no more factors than the maximal number $p$ allowed in any graph in $\mathcal{G}_-$.

Next we look at the AA sequences, where two factors are added sequentially.  Here we find \begin{equation}
N_{\mathrm{AA}}(a_+, a_-) := N_{\mathrm{A}}(a_+)  \left(\begin{array}{c} N - v - a_+ \\ a_- \end{array}\right) \left(\begin{array}{c} s+q+2-4k \\ q+1-2k \end{array}\right) \left(\begin{array}{c} v+a_+-s-q-2+4k \\ 2k-1-a_- \end{array}\right).
\end{equation}
In the RA sequences, we find \begin{equation}
N_{\mathrm{RA}}(b_+, a_-) \le N_{\mathrm{R}}(b_+) \times \left(\begin{array}{c} N - v + b_+ \\ a_- \end{array}\right) \left(\begin{array}{c} s+q+2-4k \\ q+1-2k \end{array}\right) \left(\begin{array}{c} v-b_+-s-q-2+4k \\ 2k-1-a_- \end{array}\right).
\end{equation}
For the RR sequences, we find \begin{equation}
N_{\mathrm{RR}}(b_+,b_-) \le N_{\mathrm{R}}(b_+) \times  \left\lbrace \begin{array}{ll} \displaystyle \left\lfloor \dfrac{s+q+2-4k}{b_-} \right\rfloor &\ b_- > 0 \\ p-1 &\ b_- = 0 \end{array}\right.,
\end{equation}
while for the AR sequences: \begin{equation}
N_{\mathrm{AR}}(a_+,b_-) \le N_{\mathrm{A}}(a_+) \times  \left\lbrace \begin{array}{ll} \displaystyle \left\lfloor \dfrac{s+q+2-4k}{b_-} \right\rfloor &\ b_- > 0 \\ p &\ b_- = 0 \end{array}\right..
\end{equation}

Now we must perform the sum over $a_\pm$ and $b_\pm$ in (\ref{eq:supGsum}).    We start with the sum over AA sequences, where we will crudely bound the six distinct choose functions for convenience.  Using (\ref{eq:GVHVab}), \begin{align}
\sum_{a_+=0}^{q+1-2k} \sum_{a_-=0}^{2k-1} \frac{N_{\mathrm{AA}}(a_+, a_-) }{N^{(a_+ + a_-)/2} }&< \sum_{a_+=0}^{q+1-2k} \sum_{a_-=0}^{2k-1} \frac{N^{a_+/2}  s^{2k-1} v^{q+1-2k-a_+}}{a_+! (2k-1)!(q+1-2k-a_+)!} \times \frac{N^{a_-/2} (s+q)^{q+1-2k} (v+q)^{2k-1-a_-}}{a_-! (q+1-2k)! (2k-1-a_-)!} \notag \\
&< \frac{(s+q)^q}{(q+1-2k)!(2k-1)!}\frac{\left(\sqrt{N}+v\right)^{q+1-2k}}{(q+1-2k)!} \frac{\left(\sqrt{N}+v+q\right)^{2k-1}}{(2k-1)!} \notag \\
&<  \frac{(s+q)^q \left(\sqrt{N}+v+q\right)^q }{(q+1-2k)!^2(2k-1)!^2}. \label{eq:lastsum1}
\end{align}
Next, we bound the RA sequences.  For convenience, we may just use $N_{\mathrm{R}}(b_+) \le p$: \begin{align}
\sum_{b_+=0}^{2k-1} \sum_{a_-=0}^{2k-1} \frac{N_{\mathrm{RA}}(a_+, a_-) }{N^{( a_- - b_+)/2} }&< \sum_{b_+=0}^{2k-1} \sum_{a_-=0}^{2k-1} p N^{b_+/2} \frac{N^{a_-/2} (s+q)^{q+1-2k} v^{2k-1-a_-} }{a_-! (q+1-2k)! (2k-1-a_-)!} \notag \\
&< \frac{p N^{(2k-1)/2}}{1-N^{(1-2k)/2}} \frac{(s+q)^{q+1-2k}}{(q+1-2k)!} \frac{\left(\sqrt{N} + v\right)^{2k-1}}{(2k-1)!} \notag \\
&< \frac{p (s+q)^{q+1-2k}}{(1-N^{-1/2})(q+1-2k)!(2k-1)!} \left(N + v\sqrt{N}\right)^{q/2} \label{eq:lastsum2}
\end{align}
Similarly, \begin{align}
\sum_{b_+=0}^{2k-1} \sum_{b_-=0}^{q+1-2k} N^{(b_+ + b_-)/2} N_{\mathrm{RR}}(b_+, b_-) < p(p-1) \frac{N^{q/2}}{(1-N^{-1/2})^2} \label{eq:lastsum3}
\end{align}
Lastly, and most importantly, we bound the AR sequences.  In this sum, we will split off the $b_- = q+1-2k$ contribution, and bound that more carefully: \begin{align}
\sum_{b_-=0}^{q+1-2k}  \sum_{a_+=0}^{q+1-2k} \frac{N_{\mathrm{AR}}(a_+, b_-)}{N^{(a_+ - b_-)/2}} &<  \left(p + \sum_{b_-=1}^{q+1-2k}   N^{b_-/2} \left\lfloor \frac{s+q+2-4k}{b_-} \right\rfloor \right)\left( \sum_{a_+=0}^{q+1-2k} \frac{N^{a_+/2}  s^{2k-1} v^{q+1-2k-a_+}}{a_+! (2k-1)!(q+1-2k-a_+)!} \right) \notag \\
&< \left(\frac{p N^{(q-2k)/2}}{1-N^{-1/2}} + N^{(q+1-2k)/2}  \left\lfloor \frac{s+q+2-4k}{q+1-2k} \right\rfloor \right) \frac{s^{2k-1}}{(2k-1)!} \frac{\left(\sqrt{N}+v\right)^{q+1-2k} }{(q+1-2k)!}\notag \\
&<  \left(\frac{pN^{-1/2}}{1-N^{-1/2}} +\left\lfloor \frac{s+q+2-4k}{q+1-2k} \right\rfloor \right) \frac{s^{2k-1}}{(2k-1)!} \frac{\left(N+v\sqrt{N}\right)^{q+1-2k}}{(q+1-2k)!}. \label{eq:lastsum4}
\end{align}

The combinatorial bounds above, by construction, did not depend on the initial graph $G$.  Therefore, combining (\ref{eq:lastsum1}), (\ref{eq:lastsum2}), (\ref{eq:lastsum3}) and (\ref{eq:lastsum4}),  and employing (\ref{eq:sigmadef}), (\ref{eq:collatzwieland}) and (\ref{eq:supGsum}), along with \begin{equation}
v < qp + s
\end{equation}
and other simple inequalities, we obtain 
\begin{align}
\nu_{s^\prime s}^{(p)} &< \nu_1 + \nu_2
\end{align}
where \begin{subequations}\begin{align}
\nu_1 &= \left(\frac{N+(qp+s)\sqrt{N}}{N-q}\right)^{q+1-2k} \left(\frac{q}{N-q}\right)^{2k-2} \frac{2s^{2k-1}}{q(2k-1)!} \left(\frac{pN^{-1/2}}{1-N^{-1/2}} + \frac{s+q+2-4k}{q+1-2k} \right) \\
\nu_2 &= \frac{(q-1)!}{(N-q)^{(q-2)/2}}\left(\frac{\sqrt{N}+(p+1)q+s}{\sqrt{N-q}}\right)^q  \left(\frac{p}{1-N^{-1/2}} + \frac{(s+q)^{q+1-2k}}{(2k-1)!(q+1-2k)!}\right)^2.
\end{align}\end{subequations}
Now we simplify.  Using that $N-q > \frac{1}{2}N$ from (\ref{eq:qlessN2}), along with $p<1+\kappa s \log N$,
\begin{align}
\nu_1 &< \left(1 + 2\frac{q + s\sqrt{N}(1+q\kappa \log N) + q\sqrt{N}}{N}\right)^{q+1-2k} \left(\frac{2q}{N}\right)^{2k-2} \frac{2s^{2k-1}}{q(2k-1)!} \left(\frac{s+q-2}{q+1-2k} + \frac{2(p+1)}{\sqrt{N}} \right) \notag \\
&< \frac{s+q-2}{q+1-2k} \frac{2s}{q} \left(\frac{2qs}{N}\right)^{2k-2} \exp\left[ \frac{5q^2 \kappa s \log N}{\sqrt{N}}\right].
\end{align}
In the second line, we used (\ref{eq:annoying}), $(s+q-2)\ge q+1-2k$, and $1+x < \mathrm{e}^x$, to simplify further.   Next,  \begin{align}
\nu_2 &< \frac{2^{(q-2)/2}(q-1)!}{N^{(q-2)/2}} \left(\sqrt{2} \left(1 + \frac{s+2q+q\kappa s \log N}{\sqrt{N}} \right)\right)^q\left(\frac{2\kappa s \log N}{\sqrt{N}} + \frac{(s+q)^{q+1-2k}}{(2k-1)!(q+1-2k)!} \right) \notag \\
&< \frac{2^q (q-1)! (s+q)^{q+1-2k}}{N^{(q-2)/2}}\exp\left[ \frac{2q^2 \kappa s \log N}{\sqrt{N}}\right] 
\end{align}
where in the second line we used the fact that $2\kappa s\log N + 2 < \sqrt{N}(s+q)^{q+1-2k}$, since $q+1-2k>0$ and we assumed (\ref{eq:kappasqrtNN}).   Making a few final simplifications, we obtain (\ref{eq:lma13}).
\end{proof}

The last step to prove our theorem is to simply bound $\mathcal{K}_l$.  Recall the relation between $l$ and $s$ defined in (\ref{eq:ldef}).  With probability no smaller than $\mathbb{P}_{\mathrm{success}}$ we may invoke Lemma \ref{lma13} as a bound on every $\mu_{s^\prime s}$.  Hence from Proposition~\ref{propmuss} and (\ref{eq:Kldef}), \begin{align}
\mathcal{K}_l &\le \max \left \lbrace \max_{s\in R_l} \sum_{s^\prime \in R_{l+1}} \sqrt{\mathrm{e}^{\theta + 2/\kappa}\nu_{s^\prime s}^{(p)}},  \max_{s^\prime \in R_{l+1}} \sum_{s\in R_{l}}  \sqrt{\mathrm{e}^{\theta + 2/\kappa}\nu_{s^\prime s}^{(p)}}  \right\rbrace \notag \\
&< \mathrm{e}^{\theta/2+1/\kappa} \sum_{k=1}^{q/2-1} \sqrt{\nu_{1+(q-2)l + q+2-4k, 1+(q-2)l}^{(p)}} \notag \\
&<\mathrm{e}^{\theta/2+1/\kappa} \left[ \sqrt{\frac{s+q-2}{q-1} \frac{2s}{q}} \dfrac{1}{\displaystyle 1 -\frac{2q^2s}{N}}   + q\sqrt{\frac{2^q (q-1)! (s+q)^{q-1}}{N^{(q-2)/2}}} \right] \exp\left[ \frac{5q^2 \kappa s \log N}{2\sqrt{N}}\right] \notag \\
&< \mathrm{e}^{\theta/2+1/\kappa} \left[ \sqrt{\frac{2(1+(q-2)(l+1))(1+(q-2)l)}{q(q-1)}} \dfrac{1}{\displaystyle 1 -\frac{2q^3(l+1)}{N}}   + \frac{2^{q} q! (l+1)^{(q-1)/2}}{N^{(q-2)/4}} \right] \notag \\ 
&\;\;\;\;\;\; \times \exp\left[ \frac{5q^3 \kappa (l+1) \log N}{2\sqrt{N}}\right] \notag \\
&<c (l+1)
\end{align}
where in the second line, we have used that the largest of the bounded transition rates come from the maximal value of $s$; in the third line, we used concavity of $\sqrt{x}$; in the fourth line, we plugged in for $s=1+(q-2)l$ and used further elementary inequalities;  in the fifth line, we used $l+1\le M$ along with various other elementary inequalities to further weaken the bound, and the values of $c$ and $M$ from (\ref{eq:theorcm}).
\end{proof}

Combining Corollary \ref{corolscramble} and Theorem \ref{theorSYK}, we immediately obtain 

\begin{cor}
If there exist $\gamma,\delta  \in \mathbb{R}^+$ such that \begin{equation}
q < \gamma N^{(1-\delta)/6},
\end{equation} 
then as $N\rightarrow \infty$ in the SYK ensemble, there is a constant $b\in \mathbb{R}^+$ such that the operator scrambling time obeys the fast scrambling conjecture: \begin{equation}
t_{\mathrm{s}} \ge b\log N.
\end{equation}
\end{cor}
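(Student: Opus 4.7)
The plan is to chain together Theorem~\ref{theorSYK} and Corollary~\ref{corolscramble}: the former delivers $\mathcal{K}_l \le c(l+1)$ for $l\le M$ with high probability, while the latter converts any such bound with $M \ge N^\alpha$ into the desired logarithmic scrambling time, provided $c$ and $\alpha$ are $N$-independent. So my first step is to fix $\mathrm{O}(1)$ parameters for Theorem~\ref{theorSYK}; for concreteness I would take $\kappa = \theta = 1$. With these choices and the hypothesis $q < \gamma N^{(1-\delta)/6}$, I would verify the three inequalities (4.1a)--(4.1c) and $q<N/2$ one by one for all $N$ larger than some $N_0(\gamma,\delta)$: (4.1a) is trivially true for large $N$; (4.1b) becomes $2(q-2) < N-1$, which follows since $q = \mathrm{O}(N^{1/6})$; (4.1c) reduces asymptotically to $2 < \log N$; and $q < N/2$ is automatic.

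Next I would extract an $N$-independent $c$ and a polynomial lower bound on $M$ from (\ref{eq:theorcm}). For the constant: $\sqrt{2(q-2)/q}\le \sqrt{2}$, the correction factor $(1-2\theta/(5\kappa\sqrt{N}\log N))^{-1}$ is $1+o(1)$, and the second term in the bracket contains $q^{-(q-9/2)}N^{-1/4}(4\theta/5\kappa\log N)^{(q-3)/2}$, which for $q\ge 3$ and $\log N > 4\theta/5\kappa$ is $o(1)$ in $N$. Hence $c \le c_0 := 3\mathrm{e}^2$ (say) for $N$ large. For $M$: using $q^3 < \gamma^3 N^{(1-\delta)/2}$,
\begin{equation}
\frac{\theta}{5\kappa}\,\frac{\sqrt{N}}{q^3\log N} \;\ge\; \frac{1}{5\gamma^3}\,\frac{N^{\delta/2}}{\log N},
\end{equation}
so for any $\alpha \in (0,\delta/2)$ one has $M \ge N^\alpha$ once $N$ is sufficiently large. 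I would pick, e.g., $\alpha = \delta/4$.

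Finally, I would invoke Corollary~\ref{corolscramble} with the constants $c_0$ and $\alpha$ just obtained to get $t_{\mathrm{s},\kappa} \ge b\log N$ with $b = \alpha/(8c_0(1+\mathrm{e}))$ on the event that (\ref{eq:Kll}) holds. Since Theorem~\ref{theorSYK} guarantees this event with probability at least $1 - 2(q-2)/(N-1) \to 1$ as $N\to\infty$, the conclusion follows in the almost-sure sense claimed in the paper's introduction. The main place where care is needed is not any single step of the chain but the interplay between the two regimes of $q$: the $\sqrt{2(q-2)/q}$ piece of $c$ is bounded uniformly only because $q$ never exceeds $\mathrm{O}(N^{1/6})$, while the second ``tail'' piece of $c$ is controlled only because $\log N$ eventually dominates $4\theta/5\kappa$ and $q^{q-9/2}$ blows up. One must also check that the requirement $M \ge N^\alpha$ uses precisely the exponent $(1-\delta)/6$ on $q$ --- the $q^3$ in the denominator of $M$ is what pins this exponent, and the remaining $\log N$ is absorbed into the flexibility of choosing $\alpha < \delta/2$.
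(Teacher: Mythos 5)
Your proposal is correct and follows exactly the route the paper has in mind: the paper offers no independent proof of this corollary, it simply asserts that Corollary~\ref{corolscramble} and Theorem~\ref{theorSYK} chain together, and your write-up supplies precisely the missing verification. You correctly identify the key point that $q^3$ appearing in the denominator of $M$ forces the exponent $(1-\delta)/6$, and you correctly check the hypotheses (\ref{eq:kappasqrtNN})--(\ref{eq:qlessN2}), uniform boundedness of $c$, and the polynomial lower bound $M \ge N^{\delta/4}$.

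Two small imprecisions, neither fatal. First, for $q=4$ the factor $q^{q-9/2} = 4^{-1/2}$ does not blow up; the smallness of the tail term there comes entirely from $N^{-1/4}$ and $(4/(5\log N))^{1/2}$, so the remark that ``$q^{q-9/2}$ blows up'' only applies once $q$ grows. Second, with $\kappa=\theta=1$ the failure probability is of order $q/N \sim N^{-5/6}$, which is not summable, so $\mathbb{P}_{\mathrm{success}}\to 1$ does not literally give a Borel--Cantelli ``almost sure'' statement; if one wants summability, one should choose $\kappa\theta$ larger (e.g.\ $\kappa=\theta=2$), which only shifts the constants $c$ and $b$. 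Since the SYK ensembles at different $N$ are not naturally coupled anyway, the convergence-in-probability reading is almost certainly what is meant, and your argument establishes that.
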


This is the first complete proof of a logarithmic lower bound on the operator scrambling time in a Hamiltonian quantum system whose many-body Lyapunov exponent was expected to be finite.  But we do not expect that these constant prefactors in these bounds are tight.

One shortcoming of our proof, which perhaps has physical consequences, is that we were only able to demonstrate fast scrambling and the diverging growth time of operators if $q \propto N^{(1-\delta)/6}$ for some $\delta \in \mathbb{R}^+$.   Yet it may be the case that fast scrambling is robust so long as $q \propto N^{(1-\delta)/2}$ \cite{cotler}.   However, the SYK model is likely more physical at finite $q$ anyway (e.g. $q=4$), so this may be a relatively minor point.  The large $q$ limit is  a further simplification used in the theoretical physics literature to simplify Feynman diagrammatics.

We make one last remark.  The actual distribution of eigenvalues of $M_{s^\prime s}$ (for any $s>1$) is not sharply peaked around the mean value which we have overestimated.  Instead, the distribution of eigenvalues is highly peculiar, with only a small fraction of eigenvalues, which we conjecture is $\mathrm{O}(N^{1-s})$ for $s<(q-2)M$, within an O(1) factor of $\mathcal{K}_{s^\prime s}$.   We conjecture that the maximal eigenvector of $M_{s^\prime s}$ is dominated by treelike factor graphs, analogous to (\ref{eq:graphtree}), with $\mathrm{O}(s/q)$ leaves attached to a root which connects to a single fermion.   These are precisely the graphs associated to a growing operator which started from a single fermion.  Indeed, explicit calculations confirm that such treelike graphs have significantly larger weight in $\mathrm{tr}(M_{s^\prime s}^p)$ for $p=\kappa s \log N$.  It appears as though the fastest growing operators of average size $\bar s$ is a single fermion operator $\psi_1(t)$, evolved to an appropriate time $t$. It would be interesting if this set of conjectures can be proven or disproven.

\subsection{Comparison with perturbation theory}\label{sec:pert}
Let us now compare our bounds to prior calculations in the SYK model using perturbation theory.   First, let us discuss the Lyapunov exponent as $N\rightarrow \infty$.  We have found that \begin{equation}
\lambda \le 2\sqrt{\frac{2(q-2)}{q}},
\end{equation}
a slight improvement over \cite{chen1}.   It is known analytically that \cite{stanford1802} \begin{equation}
\lambda_{\mathrm{perturbative}} = 2 \;\;\; (q=\infty),
\end{equation}
implying that our result has over estimated the true value by a factor of $\sqrt{2}$.  

\cite{stanford1802} also argued that the block probabilities $P_l(t)$ took the form \begin{subequations}\label{eq:SYKopprob}\begin{align}
P_0(t) &\approx 1 - \frac{4}{q} \log \cosh t + \cdots , \\ 
P_l(t) &\approx \frac{2}{lq} (\tanh t)^{2l} + \cdots.\;\;\;\; (l>0)
\end{align}\end{subequations}
at leading order in a large $N$ and large $q$ expansion (with no bound on the subleading corrections, denoted above as $\cdots$).  It is interesting to compare this with the following result: \begin{prop}
Consider the quantum walk of $|\varphi(t)\rangle$ generated by (\ref{eq:Haux}) with \begin{equation}
K_l(t) = c(l+1),  \label{eq:Klcl2}
\end{equation}
on the half-line where $N^\prime = \infty$.   Then \begin{equation}
P_l(t) = (\tanh (ct))^{2l} \mathrm{sech}^2 (ct).  \label{eq:optqw}
\end{equation} \label{propquantumwalk}
\end{prop}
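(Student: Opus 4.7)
The plan is to verify the closed form by direct substitution and invoke uniqueness of the unitary evolution. The statement implicitly imposes the initial condition $\varphi_l(0) = \mathbb{I}[l=0]$, corresponding to an operator of minimal size (a single fermion); this is manifestly consistent with the candidate formula, since $\tanh 0 = 0$ and $\mathrm{sech}\, 0 = 1$ give $\varphi_0(0)=1$ and $\varphi_l(0)=0$ for $l\ge 1$.

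The main step is to substitute $\varphi_l(t) = \tanh^l(ct)\, \mathrm{sech}(ct)$ into the ODE obtained by specializing (\ref{eq:prop2dos}) to $K_l(t) = c(l+1)$, namely
\begin{equation}
\frac{\mathrm{d}\varphi_l}{\mathrm{d}t} = c\, l\, \varphi_{l-1}(t) - c(l+1)\, \varphi_{l+1}(t),
\end{equation}
and check it term by term. Using $(\mathrm{sech}\, u)' = -\mathrm{sech}\, u \tanh u$ and $(\tanh u)' = \mathrm{sech}^2 u$ together with the chain rule,
\begin{equation}
\frac{\mathrm{d}\varphi_l}{\mathrm{d}t} = c\, l\, \tanh^{l-1}(ct)\, \mathrm{sech}^3(ct) - c\, \tanh^{l+1}(ct)\, \mathrm{sech}(ct).
\end{equation}
I would then rewrite $\mathrm{sech}^3 = \mathrm{sech}\,(1 - \tanh^2)$ in the first term to obtain
\begin{equation}
\frac{\mathrm{d}\varphi_l}{\mathrm{d}t} = c\, l\, \tanh^{l-1}(ct)\, \mathrm{sech}(ct) - c(l+1)\, \tanh^{l+1}(ct)\, \mathrm{sech}(ct) = c\, l\, \varphi_{l-1}(t) - c(l+1)\, \varphi_{l+1}(t),
\end{equation}
which is exactly the required ODE for every $l \ge 0$; at $l=0$, the would-be term involving $\varphi_{-1}$ has coefficient $c\cdot 0 = 0$ and drops out harmlessly, consistently with the boundary convention $K_{-1}=0$ in Proposition \ref{prop2}.

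For completeness I would verify normalization,
\begin{equation}
\sum_{l=0}^{\infty} P_l(t) = \mathrm{sech}^2(ct)\, \sum_{l=0}^{\infty} \tanh^{2l}(ct) = \frac{\mathrm{sech}^2(ct)}{1 - \tanh^2(ct)} = 1,
\end{equation}
so $(P_l(t))_{l\ge 0}$ is a genuine probability distribution for every $t\in\mathbb{R}$, and I would appeal to uniqueness of the isometric evolution generated by the antisymmetric $H_{\mathrm{aux}}$ in (\ref{eq:Haux}) (as in (\ref{eq:Lunitary})) to conclude that this must be the solution. I do not anticipate any genuine obstacle: the only non-trivial algebraic content is the Pythagorean identity $\mathrm{sech}^2 = 1 - \tanh^2$, which is precisely what converts the stray $c\, l\, \tanh^{l-1}\mathrm{sech}^3$ piece coming from differentiating the $\mathrm{sech}$ factor into the needed $-c(l+1)\tanh^{l+1}\mathrm{sech}$ contribution, together with the leading $c\, l\, \varphi_{l-1}$ term.
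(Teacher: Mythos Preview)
Your proposal is correct but follows a genuinely different route from the paper. The paper's proof is constructive: it introduces the generating function $G(z,t)=\sum_{l\ge 0} z^{l+1}\varphi_l(t)$, converts the recursion (\ref{eq:prop2dos}) into the first-order PDE $\partial_t G=(z^2-1)\partial_z G + G/z$, solves it by the method of characteristics with initial data $G(z,0)=z$, and then reads off $\varphi_l$ from the Taylor expansion of $G(z,t)=z\,\mathrm{sech}\,t/(1-z\tanh t)$. Your argument instead guesses the answer and verifies it by direct substitution, using only the identity $\mathrm{sech}^2=1-\tanh^2$ and the normalization check. Your approach is more elementary and shorter once the closed form is in hand; the paper's approach has the advantage of \emph{deriving} the formula rather than presupposing it, which is why it naturally belongs in a paper where (\ref{eq:optqw}) is being presented for the first time. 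The only point where you are slightly informal is the uniqueness appeal: on the half-line $N'=\infty$ the generator is an unbounded antisymmetric operator, so invoking (\ref{eq:Lunitary}) literally requires an essential self-adjointness check. In practice this is harmless here since the explicit solution stays in $\ell^2$ for all $t$, and the paper's proof is no more careful on this point.
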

\begin{proof}
Without loss of generality, we rescale time so that $c=1$.  Then, we repackage (\ref{eq:prop2dos}) using generating functions: \begin{equation}
G(z,t) :=  \sum_{l=0}^\infty z^{l+1} P_l(t),
\end{equation}
so that (\ref{eq:prop2dos}) with (\ref{eq:Klcl2}) implies that \begin{equation}
\frac{\partial G}{\partial t} = z^2 \frac{\partial G}{\partial z} - z \frac{\partial}{\partial z} \left(\frac{G}{z}\right) = \left(z^2-1\right)\frac{\partial G}{\partial z} + \frac{G}{z}.
\end{equation}
This equation is solved by the method of characteristics.  The characteristic curves $z(t)$ solve the differential equation \begin{equation}
\frac{\mathrm{d}z}{\mathrm{d}t} = \left(1-z^2\right).
\end{equation}
With initial condition $z(0)=r$, we find \begin{equation}
t = \frac{1}{2} \log \frac{(1-r)(1+z)}{(1+r)(1-z)},
\end{equation}
or \begin{equation}
r = \frac{z\cosh t - \sinh t}{\cosh t - z\sinh t}.
\end{equation}
Solving the equation \begin{equation}
\frac{\partial G(r,t)}{\partial t} = \frac{G}{z}
\end{equation}
with $G(r,0)=r$ (corresponding to $P_0(0)=1$): \begin{equation}
\log \frac{G}{r} = \int\limits_0^t \mathrm{d}t^\prime \frac{\cosh t^\prime  + r\sinh t^\prime}{\sinh t^\prime + r\cosh t^\prime} = \log \frac{\sinh t + r\cosh t }{r}.
\end{equation}
Thus, \begin{equation}
G(z,t)= \frac{z\mathrm{sech}t}{(1 - z\tanh t)},
\end{equation}
which leads to (\ref{eq:optqw}) upon Taylor expanding and employing (\ref{eq:varphisqrt}).
\end{proof}

Some of the discrepancy between (\ref{eq:optqw}) and (\ref{eq:SYKopprob}) can be accounted for by our sloppy overestimate of $K_l(t)$ in the SYK model.  In particular, a more careful analysis demonstrates that $K_0(t) \lesssim \sqrt{2/q}$ and $K_l(t) \lesssim l$.  However, this slow first step does not change our estimate for the Lyapunov exponent.

This result is highly suggestive that the qualitative structure of the growing operator distribution in the SYK model, calculated perturbatively, is not substantially modified by non-perturbative physics.  Rather it appears quite similar to an ``optimal" quantum walk that locally maximizes the transition rates from one operator size to the next.\footnote{However, such an ``optimal" quantum walk likely does not maximize the probability of large size operators, and perhaps does not even optimize the time-dependent average size.}  This may imply some universality to the patterns of operator growth in random regular $q$-local quantum systems.   If such universality exists, it may  have interesting implications for quantum gravity.

\section{Conclusion}
In this paper, we have proven the fast scrambling conjecture in the SYK model with a finite but large number $N$ of degrees of freedom.   While this result is not physically surprising, it is pleasing to have a mathematically careful derivation of this result.  We also expect that the methods developed here will lead to further advances in our technology for bounding quantum information dynamics and operator growth \cite{chen1, chen2} beyond the Lieb-Robinson theorem \cite{liebrobinson, hastings}.

We would like to say that  our demonstration of the robustness of operator growth  to  non-perturbative physics in at least one holographic model is a signature that the bulk geometry is semiclassical and that non-perturbative  fluctuations in quantum gravity are provably mild.  Unfortunately, this remains a conjecture, as the emergent geometry arises at finite temperature.  It would be interesting if our methods can be generalized to finite temperature states. 

Lastly, we expect these techniques are useful for designing and constraining toy models of quantum gravity which can be experimentally studied using quantum simulation \cite{Garttner2017, Li2017}.  At the very least, any tentative model must reproduce the exponential growth in operator size which is a  hallmark of particles falling towards black hole horizons.  Our  methods will not only bound the Lyapunov exponent of any proposed model, but also check whether the full time  evolution of  the operator size distribution could be consistent with a theory of quantum gravity.

\section*{Acknowledgments}
This work was supported by the University of Colorado.

\bibliographystyle{unsrt}
\bibliography{sykbib}

\end{document}